\newcommand{\longversion}[1]{#1}
\newcommand{\shortversion}[1]{}
\definecolor{antiquebrass}{rgb}{0.8, 0.58, 0.46}
\renewcommand{\leq}{\leqslant}
\renewcommand{\ge}{\geqslant}
\renewcommand{\le}{\leqslant}
\newtheorem{lemma}{Lemma}
 \newtheorem{definition}{Definition}
 \setlist{nolistsep,leftmargin=*}
\newcommand{\eps}{\ensuremath{\varepsilon}\xspace}
\renewcommand{\epsilon}{\eps}
\let\mydelta\delta
\renewcommand{\delta}{\ensuremath{\mydelta}\xspace}
\let\myalpha\alpha
\renewcommand{\alpha}{\ensuremath{\myalpha}\xspace}
\let\mystar\star
\renewcommand{\star}{\ensuremath{\mystar}}
\newcommand{\VC}{{\sc  Vertex Cover}\xspace}
\newcommand{\PVC}{{\sc  Partial Vertex Cover}\xspace}
\newcommand{\RSED}{{\sc Reducing Total Similarity}\xspace}
\newcommand{\ESED}{{\sc Eliminating Similarity}\xspace}
\newcommand{\RMSED}{{\sc Reducing Maximum Similarity}\xspace}
\newcommand{\USC}{{\sc Uniform Set Cover}\xspace}
\newcommand{\SC}{{\sc Set Cover}\xspace}
\newcommand{\pr}{\ensuremath{\prime}}
\newcommand{\el}{\ensuremath{\ell}\xspace}
\newcommand{\NP}{\ensuremath{\mathsf{NP}}\xspace}
\newcommand{\WOH}{\ensuremath{\mathsf{W[1]}}\text{-hard}\xspace}
\newcommand{\WTH}{\ensuremath{\mathsf{W[2]}}\text{-hard}\xspace}
\newcommand{\WO}{\ensuremath{\mathsf{W[1]}}\xspace}
\newcommand{\WT}{\ensuremath{\mathsf{W[2]}}\xspace}
\newcommand{\WP}{\ensuremath{\mathsf{W[P]}}\xspace}
\newcommand{\XP}{\ensuremath{\mathsf{XP}}\xspace}
\newcommand{\FPT}{\ensuremath{\mathsf{FPT}}\xspace}
\newcommand{\NPH}{\ensuremath{\mathsf{NP}}-hard\xspace}
\newcommand{\NPC}{\ensuremath{\mathsf{NP}}-complete\xspace}
\newcommand{\pNPH}{para-\ensuremath{\mathsf{NP}}-hard\xspace}
\newcommand{\similarity}{\text{{\sc{sim}}}\xspace}
\newcommand{\YES}{{\sc{yes}}\xspace}
\newcommand{\NO}{{\sc{no}}\xspace}
\newcommand{\true}{{\sc{true}}\xspace}
\newcommand{\false}{{\sc{false}}\xspace}
\newcommand{\RB}{\ensuremath{\mathbb R}\xspace}
\newcommand{\CC}{\ensuremath{\mathcal C}\xspace}
\newcommand{\DD}{\ensuremath{\mathcal D}\xspace}
\newcommand{\EE}{\ensuremath{\mathcal E}\xspace}
\newcommand{\FF}{\ensuremath{\mathcal F}\xspace}
\newcommand{\GG}{\ensuremath{\mathcal G}\xspace}
\newcommand{\MM}{\ensuremath{\mathcal M}\xspace}
\newcommand{\NN}{\ensuremath{\mathcal N}\xspace}
\newcommand{\OO}{\ensuremath{\mathcal O}\xspace}
\newcommand{\PP}{\ensuremath{\mathcal P}\xspace}
\renewcommand{\SS}{\ensuremath{\mathcal S}\xspace}
\newcommand{\TT}{\ensuremath{\mathcal T}\xspace}
\newcommand{\UU}{\ensuremath{\mathcal U}\xspace}
\newcommand{\VV}{\ensuremath{\mathcal V}\xspace}
\newcommand{\WW}{\ensuremath{\mathcal W}\xspace}
\newcommand{\XX}{\ensuremath{\mathcal X}\xspace}
\newcommand{\YY}{\ensuremath{\mathcal Y}\xspace}
\newcommand{\ZZ}{\ensuremath{\mathcal Z}\xspace}
\newtheorem{theorem}{\bf Theorem}
\newtheorem{proposition}{\bf Proposition}
\newtheorem{corollary}{\bf Corollary}
\crefname{theorem}{Theorem}{Theorems}
\crefname{observation}{Observation}{Observations}
\crefname{lemma}{Lemma}{Lemmas}
\crefname{corollary}{Corollary}{Corollaries}
\crefname{proposition}{Proposition}{Propositions}
\crefname{definition}{Definition}{Definitions}
\crefname{claim}{Claim}{Claims}
\crefname{table}{Table}{Tables}
\crefname{equation}{Inequality}{Inequalities}
\crefname{reductionrule}{Reduction rule}{Reduction rules}
\crefname{section}{Section}{Sections}
\title{Manipulating Node Similarity Measures in Networks}
\author{Palash Dey$^\star$ and Sourav Medya$^\dagger$\\
$^\star$Indian Institute of Technology, Kharagpur\\
$^\dagger$Northwestern University - Kellogg School of Management\\
$^\star$palash.dey@cse.iitkgp.ac.in, $^\dagger$sourav.medya@kellogg.northwestern.edu}
\begin{document}

\maketitle

\begin{abstract}
Node similarity measures quantify how similar a pair of nodes are in a network. These similarity measures turn out to be an important fundamental tool for many real world applications such as link prediction in networks, recommender systems etc. An important class of similarity measures are {\em local similarity measures}. Two nodes are considered similar under local similarity measures if they have large overlap between their neighboring set of nodes. Manipulating node similarity measures via removing edges is an important problem. This type of manipulation, for example, hinders effectiveness of link prediction in terrorists networks. All the popular computational problems formulated around manipulating similarity measures turn out to be \NPH. We, in this paper, provide fine grained complexity results of these problems through the lens of parameterized complexity. In particular, we show that some of these problems are fixed parameter tractable (\FPT) with respect to various natural parameters whereas other problems remain intractable (\WOH and \WTH in particular). Finally we show the effectiveness of our proposed FPT algorithms on real world datasets as well as synthetic networks generated using Barabasi-Albert and Erdos-Renyi models.
\end{abstract}

\section{Introduction}

\begin{table*}[!htbp]
	\centering
	\renewcommand{\arraystretch}{1.2}
	\begin{tabular}{|c|c|c|}\hline
		Problem & Parameter& Results\\\hline\hline
		
		\multirow{4}{*}{\RSED} & $k$& \makecell{\WOH even for stars ~[\Cref{thm:rsed_k_woh}]}\\\cline{2-3}
		& $|\SS|$ & \FPT~[\Cref{fpt:rsed_S}]\\\cline{2-3}
		& $\Delta$ & $\OO(2^\Delta\text{poly}(n))$~[\Cref{fpt:esed_D}]\\\cline{2-3}
		& $\delta$ & \pNPH~[\Cref{thm:ph_avg_deg_esed}]\\\hline
		
		\multirow{4}{*}{\ESED} & $k$& \makecell{$\OO(1.2738^k+km)$~[\Cref{thm:etsed}],\\ $2$ approximation~[\Cref{cor:etsed_2aprox}],\\
			$2-\eps$ inapproximable under UGC~[\Cref{cor:etsed_2aprox_lb}]} \\\cline{2-3}
		& $|\SS|$ & \FPT~[\Cref{fpt:esed_S}]\\\cline{2-3}
		& $\Delta$ & $\OO(2^\Delta\text{poly}(n))$~[\Cref{fpt:rsed_D}]\\\cline{2-3}
		& $\delta$ & \pNPH~[\Cref{col:wh_avg_deg_rsed_rmsed}]\\\hline
		
		\multirow{4}{*}{\RMSED} & $k$& \makecell{\WTH~[\Cref{thm:rmsed_k_wth}]}\\\cline{2-3}
		& $|\SS|$ & \FPT~[\Cref{fpt:rmsed_S}]\\\cline{2-3}
		& $\Delta$ & \pNPH~[\Cref{thm:rmsed_D}]\\\cline{2-3}
		& $\delta$ & \pNPH~[\Cref{col:wh_avg_deg_rsed_rmsed}]\\\hline
	\end{tabular}
	\caption{Summary of results. We denote the maximum degree and average of input graph by $\Delta$ and $\delta$ respectively.}\label{tbl:summary}
\end{table*}

Analyzing social networks for uncovering hidden information has a wide range of applications in artificial intelligence (see~\cite{DBLP:conf/atal/SabaterS02,otte2002social,wang2007social,carrington2005models} and references therein for a variety of applications). One of the fundamental tools for social network analysis is the notion of a node similarity measure in networks. A node similarity measure is a function which quantifies the similarity between pairs of nodes of a given network. One such popular measures is the Jaccard similarity. The Jaccard similarity between two nodes $u$ and $v$ is the ratio of the number of common neighbors of $u$ and $v$ by the total number of nodes which are neighbor to at least one of $u$ or $v$. The Jaccard similarity measure belongs to a broad class of measures called {\em local} similarity measures. A similarity measure is called local if the similarity between two nodes depends only on their neighborhood (nodes which have an edge with at least one of the pair of nodes).

One of the most useful applications of network similarity measures is link prediction --- given a network, predict the edges which are likely to be added in the network in future~\cite{al2006link,liben2007link,wang2015link,zhou2009predicting}. The link prediction problem has a variety of applications from uncovering hidden links in a covert network~\cite{lim2019hidden} to recommendation systems~\cite{chen2005link,talasu2017link,campana2017recommender}.

Zhou et al. observed that the effectiveness of network similarity measures can be hampered substantially using various kind of attacks, edge deletion (that is hiding some of the existing edges) for example~\cite{Zhou:2019:ASL:3306127.3331707}. Indeed, they empirically showed that manipulation by deleting edges affects the effectiveness of important similarity measures in real world and synthetically generated networks. Fortunately, they showed that the computational task of manipulating any such measure is \NPC.

\subsubsection{Motivation:}

We abstract out the main computational challenge of manipulating any local similarity measure into three purely graph theoretic problems which does not depend directly on the particular similarity measures under consideration. Since the similarity between any two nodes decreases (under manipulation via edge deletion) under any local metric only when the number of their common neighbors decreases, the fundamental task in all our problems is to reduce the number of common neighbors of some given targeted pair of nodes. In our first problem, called \ESED, the input consists of a graph, a set of targeted pairs of nodes, a set \CC of edges which can be removed, and the maximum number $k$ of edges that we can delete. We need to compute if there exist $k$ edges in \CC whose removal ensures that every pair of given nodes has disjoint neighborhood. In some applications, covert networks for example, one may sustain some small amount of similarity even between targeted pairs of nodes. This motivates our second problem. In the \RSED problem, the input is the same as \ESED. We are additionally given an integer $t$ and we need to compute if there exist $k$ edges in \CC whose removal ensures that the sum of the number of common neighbors in the targeted pairs of nodes is at most $t$. One can immediately observe that it may happen that, although the sum of the number of common neighbors is reasonably small (compared with the number of pairs given), for few targeted pairs, the number of common neighbors (and thus similarity) could be high. We take care of this issue in the \RMSED problem. In this problem, the input is the same as the \RSED problem but the goal is to find (if exists) a set of at most $k$ edges in \CC whose removal ensures that the number of common neighbor between any targeted pair of nodes is at most $t$. The advantage of abstracting out the network similarity manipulation problem into graph theoretic problems is its universality. All our results, both algorithms and structural, apply uniformly for all local similarity measures. Obviously, the generality is achieved at a cost: it may be possible to have fine tuned specialized algorithms for specific similarity measures. 

\subsubsection{Contribution:}

We study parameterized complexity of our problems and also derive few results on approximation algorithms for some of our problems as a by product of our techniques. We consider three parameters: (i) the maximum number $k$ of edges that can be removed, (ii) The number $|\SS|$ of pairs of nodes whose similarity we wish to reduce or eliminate, (iii) the maximum degree of any vertex, and (iv) average degree of the graph. We exhibit FPT algorithms for all our problems parameterized by $|\SS|$. Whereas, for the parameter $k$, we show that only the \ESED problem admits an \FPT algorithm; the other two problems are intractable with respect to this parameter. We summarize our contribution in this paper in \Cref{tbl:summary}. We also show the following results.

\begin{itemize}
	\item Suppose, in the \ESED problem, there exists a set \WW of (important) vertices such that the set \SS of pairs of vertices between which we wish to eliminate similarity is the set of all pairs of vertices in \WW (that is, $\SS=\{\{u,v\}:u,v\in\WW,u\ne v\}$). Then we show that the \ESED problem is polynomial time solvable~[\Cref{thm:esed_spl_poly}].
	
	\item Experiments: We evaluate our proposed FPT algorithm on two synthetic and four real datasets from different genres. We show that our FPT algorithms produce significantly better results than some standard baselines. Moreover, our algorithms are efficient and significantly faster than the baselines.   

\end{itemize}

We observe that the \RMSED problem is the hardest followed respectively by \RSED and \ESED.

\section{Preliminaries and Problem Formulation}

\subsection{Similarity Measures:}

Let $\GG=(\VV,\EE)$ be any undirected and unweighted graph. If not mentioned otherwise, we denote the number of vertices and the number of edges by $n$ and $m$ respectively. For any vertex $v\in\VV$, its neighborhood $\NN(v)$ is defined to be the set of all vertices which shares an edge with $v$, that is, we define $\NN(v)=\{u\in\VV: \{u,v\}\in\EE\}$. A similarity measure \similarity of a graph $\GG=(\VV,\EE)$ is a function $\similarity:\VV\times\VV\longrightarrow\RB_{\ge0}$. We usually do not talk about similarity of a vertex with itself and thus, for convenience, we define $\similarity(v,v)=0$ for every $v\in\VV$. We call a similarity function \similarity~{\em local} if, for every pair of vertices $u,v\in\VV$, (i) $\similarity(u,v)$ depends only on $\NN(u)$ and $\NN(v)$, (ii) $\similarity(u,v)=0$ if and only if $\NN(u)\cap\NN(v)=\emptyset$. The popular Jaccard similarity is local: The Jaccard similarity between $u$ and $v$ is 
$\similarity(u,v)= \frac{|\NN(u)\cap \NN(v)|}{|\NN(u)\cup \NN(v)|}$. The Adamic/Adar index is another important example of local similarity measure. Here, the similarity between two nodes $u$ and $v$ is defined as $\similarity(u,v)= \sum_{x\in \NN(u)\cap \NN(v)}\frac{1}{\log |\NN(x)|}$. If we restrict ourselves to local similarity measures, reducing similarity (by deleting edges) between two nodes under any local similarity measure boils down to reducing their common neighbors. With this observation, we define generic computational problems which captures the essence of the computational challenge of reducing similarity between two nodes under any local similarity measure.

\subsection{Problem Definition}

In first problem, we are given a set of pairs of nodes and we need to compute the set of minimum number of edges whose removal ensures that there is no common neighbor between any of the given pair of nodes.

\begin{definition}[\ESED]\label{probdef:esed}
	Given a graph $\GG=(\VV,\EE)$, a target set $\SS=\{\{x,y\}|x\in \VV,y \in \VV\}$ of pairs of vertices, a subset $\CC \subseteq\EE$ of candidate edges which can be deleted, and an integer $k$ denoting the maximum number of edges that can be deleted, compute if there exists a subset $\FF\subseteq \CC$ such that $|\FF|\leq k$ and no pair of vertices in \SS has any common neighbor in the graph $\GG\setminus\FF$. We denote an arbitrary instance of this problem by $(\GG,\SS,\CC,k)$.
\end{definition}

We generalize the \ESED problem where the goal is to
find a given budget number of edges such that the sum of the number of common neighbours in the given pairs of vertices is below some threshold. Formally, it is defined as follows.

\begin{definition}[\RSED]\label{probdef:rsed}
Given a graph $\GG=(\VV,\EE)$, a target set $\SS=\{\{x,y\}|x\in \VV,y \in \VV\}$ of pairs of vertices, a subset $\CC \subseteq\EE$ of candidate edges which can be removed, an integer $k$ denoting the maximum number of edges that can be deleted, and an integer $t$ denoting the target sum of number of common neighbors between vertices in \SS, compute if there exists a subset $\FF\subseteq \CC$ such that $|\FF|\leq k$ and sum of number of common neighbors between pairs of vertices in \SS is at most $t$ in the graph $\GG\setminus\FF$. We denote an arbitrary instance of this problem by $(\GG,\SS,\CC,k,t)$.
\end{definition}

In a solution to the \RSED problem, it may be possible that, although the sum of common neighbors between given pairs of vertices is below some threshold, there exist pairs of vertices in the given set which still share many neighbors. In the \RMSED problem defined below, the goal is to find $k$ edges to delete such that the maximum overlap between neighbours does not go beyond a threshold.

\begin{definition}[\RMSED]\label{probdef:rmsed}
Given a graph $\GG=(\VV,\EE)$, a target set $\SS=\{\{x,y\}|x\in \VV,y \in \VV\}$ of pairs of vertices, a subset $\CC \subseteq\EE$ of candidate edges which can be deleted, an integer $k$ denoting the maximum number of edges that can be deleted, and an integer $t$ denoting the target maximum of number of common neighbors between vertices in \SS, compute if there exists a subset $\FF\subseteq \CC$ such that $|\FF|\leq k$ and the number of common neighbors between every pair of vertices in \SS is at most $t$ in the graph $\GG\setminus\FF$. We denote an arbitrary instance of this problem by $(\GG,\SS,\CC,k,t)$.
\end{definition}

The following complexity theoretic relationship among our problems is straight forward.

\begin{proposition}\label{prop:connection}
	\ESED polynomial time many-to-one reduces to \RSED. \ESED polynomial time many-to-one reduces to \RMSED.
\end{proposition}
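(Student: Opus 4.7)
The plan is to show that \ESED sits as a boundary case of both generalized problems by taking the threshold $t=0$. Given any instance $(\GG,\SS,\CC,k)$ of \ESED, I would map it to the instance $(\GG,\SS,\CC,k,0)$ of \RSED, and likewise to the instance $(\GG,\SS,\CC,k,0)$ of \RMSED. This map is computable in linear time, so the only thing that needs verification is correctness of the equivalences.

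For the reduction to \RSED, I would argue as follows. Since the number of common neighbors of any pair of vertices is a nonnegative integer, requiring that the sum over all pairs in $\SS$ be at most $0$ is equivalent to requiring that every pair in $\SS$ has exactly zero common neighbors, i.e., disjoint neighborhoods in $\GG\setminus\FF$. Hence a set $\FF\subseteq\CC$ with $|\FF|\le k$ witnesses a \YES-answer for $(\GG,\SS,\CC,k)$ under \ESED if and only if it witnesses a \YES-answer for $(\GG,\SS,\CC,k,0)$ under \RSED.

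For the reduction to \RMSED, the same observation applies: requiring the maximum over pairs in $\SS$ of the number of common neighbors to be at most $0$ forces every pair in $\SS$ to have no common neighbor, which is precisely the \ESED condition. So $\FF$ is a valid deletion set for the \ESED instance if and only if it is a valid deletion set for the \RMSED instance with $t=0$.

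There is no real obstacle here; the only thing to check is that setting $t=0$ is a legal input (nothing in \cref{probdef:rsed} or \cref{probdef:rmsed} forbids $t=0$), and that the mapping is computable in polynomial (indeed linear) time, both of which are immediate. This establishes both many-to-one polynomial-time reductions claimed in the proposition.
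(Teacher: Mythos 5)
Your proof is correct and is exactly the intended argument: the paper states \Cref{prop:connection} as straightforward, and setting $t=0$ in \Cref{probdef:rsed} and \Cref{probdef:rmsed}, using nonnegativity of common-neighbor counts, is precisely the reduction the authors have in mind.
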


If not mentioned otherwise, we use $k$ to denote the number of edges that we are allowed to remove, $\SS$ to denote the target set.

\subsection{Parameterized complexity} 
A parameterized problem $\Pi$ is a 
subset of $\Gamma^{*}\times
\mathbb{N}$, where $\Gamma$ is a finite alphabet. A central notion is \emph{fixed parameter 
	tractability} (FPT) which means, for a 
given instance $(x,k)$, solvability in time $f(k) \cdot p(|x|)$, 
where $f$ is an arbitrary function of $k$ and 
$p$ is a polynomial in the input size $|x|$. There exists a hierarchy of complexity classes above FPT, and showing that a parameterized problem is hard for one of these classes is considered
evidence that the problem is unlikely to be fixed-parameter tractable. The main classes in this hierarchy are: $ \FPT  \subseteq \WO \subseteq \WT \subseteq \cdots \subseteq \WP \subseteq \XP.$ We now define the notion of parameterized reduction~\cite{CyganEtAl}.
\begin{definition}
	Let $A,B$ be parameterized problems.  We say that $A$ is {\bf \em fpt-reducible} to $B$ if there exist functions 
	$f,g:\mathbb{N}\rightarrow \mathbb{N}$, a constant $\alpha \in \mathbb{N}$ and 
	an algorithm $\Phi$ which transforms an instance $(x,k)$ of $A$ into an instance $(x',g(k))$ of $B$ 
	in time $f(k) |x|^{\alpha}$ 
	so that $(x,k) \in A$ if and only if $(x',g(k)) \in B$.
\end{definition}

To show W-hardness, it is enough to give a parameterized reduction from a known hard problem.

\section{Algorithmic Results}

In this section, we present our polynomial time and FPT algorithms. Our first result shows that the \ESED problem is fixed parameter tractable parameterized by the number $k$ of edges that we are allowed to delete. In the interest of space, we omit few proofs. We mark them with $(\star)$.

\begin{theorem}\label{thm:etsed}
 There exists an algorithm for the \ESED problem which runs in time $\OO(1.2738^k+km)$.
\end{theorem}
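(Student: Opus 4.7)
The plan is to reduce \ESED to \VC and invoke the Chen--Kanj--Xia algorithm, which solves \VC in $\OO(1.2738^k + kn)$ time. The running time $\OO(1.2738^k + km)$ in the statement matches the size of the \VC instance that our reduction will produce.

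The key observation driving the reduction is: for every target pair $\{x,y\}\in\SS$ and every $z \in \NN(x) \cap \NN(y)$, an admissible solution must delete at least one of the edges $\{x,z\}$ or $\{y,z\}$. I call such a triple $(x,y,z)$ a \emph{witness}. First I would do a preprocessing pass over all witnesses: (i) if neither $\{x,z\}$ nor $\{y,z\}$ lies in $\CC$, output \NO; (ii) if exactly one of them lies in $\CC$, include it in a forced-deletion set $\FF_0$ and decrement the budget by $1$. After this pass, all remaining witnesses have both corresponding candidate edges in $\CC$. If the decremented budget becomes negative, output \NO.

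Next I build an auxiliary graph $H$ with vertex set $V(H)=\CC$. For every surviving witness $(x,y,z)$, I add an edge in $H$ joining the vertices $\{x,z\}$ and $\{y,z\}$. The core claim is then that an instance $(\GG,\SS,\CC,k)$ is a \YES-instance iff $H$ admits a vertex cover of size at most $k' := k - |\FF_0|$. The forward direction is immediate since any valid $\FF$ must intersect each witness pair and hence cover every edge of $H$ (and conversely $\FF \cup \FF_0$ is a valid solution if $\FF$ is a vertex cover of $H$ of size $k'$). I then invoke Chen--Kanj--Xia on $(H,k')$; since $|V(H)| \le |\CC| \le m$, the algorithm runs in time $\OO(1.2738^{k'} + k' m) = \OO(1.2738^{k} + k m)$.

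The main obstacle is bookkeeping the preprocessing and the construction of $H$ within the claimed $\OO(km)$ term rather than something like $\OO(|\SS|\cdot m)$. For each target pair $\{x,y\}\in\SS$, enumerating $\NN(x) \cap \NN(y)$ naively costs $\OO(n)$, and each discovered witness contributes at most one edge to $H$. A key observation that keeps the time bound in check is that we only need to enumerate witnesses until either the forced-deletion set $\FF_0$ exceeds $k$ (allowing us to abort) or the number of edges in $H$ exceeds a function bounded in terms of $k$ (since a vertex cover of size $k$ can cover at most $k\cdot|V(H)|$ edges, and by kernelization \VC instances of answer $\le k$ can be reduced to $\OO(k^2)$ vertices and edges). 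Combined with the $\OO(k|V(H)|)$ kernelization/search time of Chen--Kanj--Xia, this yields the overall bound $\OO(1.2738^k + km)$.
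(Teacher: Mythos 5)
Your proposal is correct and takes essentially the same route as the paper's proof: the identical forced-deletion preprocessing (output \NO if neither edge of a witness is in \CC, force the unique candidate edge otherwise), the same reduction to \VC on an auxiliary graph whose vertices are the edges of \GG (you restrict to \CC, which is immaterial after preprocessing) with one edge per common-neighbor witness, and the same appeal to the Chen--Kanj--Xia algorithm giving $\OO(1.2738^k+km)$ since that graph has at most $m$ vertices. Your closing discussion of bounding the construction cost is extra care the paper itself omits, but it does not alter the argument.
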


\begin{proof}
 Let $(\GG=(\VV,\EE),\SS,\CC,k)$ be an arbitrary instance of \ESED. For every pair $\{x,y\}\in\SS$ and every $u\in\VV$ with $\{\{u,x\},\{u,y\}\}\in\EE$, if $|\{\{u,x\},\{u,y\}\}\cap\CC|=0$, then we output \NO; if $|\{\{u,x\},\{u,y\}\}\cap\CC|=1$ and $k\ge1$, then we remove the edge $\{\{u,x\},\{u,y\}\}\cap\CC$ and decrease $k$ by $1$. If no pair of vertices in \SS have common neighbor in the current graph, then we output \YES. Otherwise let us assume without loss of generality that, in the current instance $(\GG=(\VV,\EE),\SS,\CC,k)$, for every pair $\{x,y\}\in\SS$ and every $u\in\VV$ with $\{\{u,x\},\{u,y\}\}\in\EE$, we have $\{\{u,x\},\{u,y\}\}\subseteq\CC$.
 
 We now reduce the \ESED instance to a \VC $(\GG^\pr=(\VV^\pr,\EE^\pr),k^\pr)$. We have $\VV^\pr=\{v_e: e\in\EE\}$. For any two vertices $v_{e}, v_{f}\in\VV^\pr$, we have an edge between them in $\GG^\pr$, that is $\{v_e,v_f\}\in\EE^\pr$, if there exist vertices $a,b,c\in\VV$ such that $e=\{a,b\}, f=\{b,c\},$ and $\{a,c\}\in\SS$. We define $k^\pr=k$. We now claim that $(\GG,\SS,\CC,k)$ is a \YES instance of \ESED if and only if $(\GG^\pr,k^\pr)$ is a \YES instance of \VC.
 
 Let $(\GG,\SS,\CC,k)$ be a \YES instance of \ESED. Then there exists a subset $\FF\subseteq\EE$ of edges in \GG with $|\FF|\le k$ such that no pair of vertices in \SS has a common neighbor in $\GG\setminus\FF$. We claim that $\WW=\{v_e: e\in\FF\}$ is a vertex cover of $\GG^\pr$. Suppose not, then there exists at least one edge $\{v_e,v_f\}\in\EE^\pr$ which \WW does not cover. Then, by the construction of $\GG^\pr$, we have three vertices $a,b,c\in\VV$ such that $e=\{a,b\}, f=\{b,c\},$ and $\{a,c\}\in\SS$. This contradicts our assumption that no pair of vertices in \SS has a common neighbor in $\GG\setminus\FF$. Hence \WW forms a vertex cover of $\GG^\pr$. Moreover the size of \WW is $k$ which is same as $k^\pr$. Hence the \VC instance is a \YES instance.
 
 For the other direction, let $(\GG^\pr,k^\pr)$ is a \YES instance of \VC. Then there exists a vertex cover $\WW\subseteq\VV^\pr$ of $\GG^\pr$ of size at most $k^\pr$. Let us define $\FF=\{e\in\EE: v_e\in\WW\}$. We claim that there is no pair of vertices in \SS which has a common neighbor in $\GG\setminus\FF$. Suppose otherwise, then there exists a pair $\{a,c\}\in\SS$ which has a common neighbor $b\in\VV$ in $\GG\setminus\FF$. However, this implies that \WW does not cover the edge $\{v_{\{a,b\}}, v_{\{b,c\}}\}$ in $\GG^\pr$ which is a contradiction.
 
 We now describe our FPT algorithm. We first pre-process any instance $(\GG=(\VV,\EE),\SS,\CC,k)$ of \ESED and ensure that for every pair $\{x,y\}\in\SS$ and every $u\in\VV$ with $\{\{u,x\},\{u,y\}\}\in\EE$, we have $\{\{u,x\},\{u,y\}\}\subseteq\CC$. We then construct a corresponding instance $(\GG^\pr,k^\pr)$ of \VC as described above. We then compute a vertex cover \WW (if it exists) of $(\GG^\pr,k^\pr)$ and output the corresponding set of edges of \GG to be removed. If the vertex cover instance is a \NO instance, then we output \NO. We observe that the number of vertices in the vertex cover instance is the same as the number of edges of the \ESED instance. Now the claimed running time of our algorithm follows from the fact that there is an algorithm for the \VC problem which runs in time $\OO(1.2738^k+kn)$ where $n$ is the number of vertices in the \VC instance and $k$ is the size of the vertex cover we are seeking~\cite{DBLP:conf/mfcs/ChenKX06}.
\end{proof}

In the proof of \Cref{thm:etsed}, we exhibit aN approximation preserving reduction from \ESED to \VC. Since \VC admits a $2$ factor polynomial time approximation algorithm (see for example~\cite{DBLP:books/daglib/0004338}), we obtain the following result as an immediate corollary of \Cref{thm:etsed}.

\begin{corollary}\label{cor:etsed_2aprox}
	There exists a polynomial time algorithm for optimizing $k$ in the \ESED problem within a factor of $2$.
\end{corollary}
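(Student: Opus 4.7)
The plan is to leverage the reduction in the proof of \Cref{thm:etsed} as an approximation-preserving reduction from the optimization version of \ESED to the optimization version of \VC, and then invoke the classical polynomial-time $2$-approximation algorithm for \VC (see e.g.~\cite{DBLP:books/daglib/0004338}).

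First, given an instance $(\GG,\SS,\CC)$ of the optimization version of \ESED, I would run the pre-processing step from the proof of \Cref{thm:etsed}: for every targeted pair $\{x,y\}\in\SS$ and every vertex $u$ adjacent to both $x$ and $y$, if neither of the edges $\{u,x\},\{u,y\}$ lies in $\CC$ then the instance is infeasible and I report this; if exactly one of them lies in $\CC$, then that edge is the only candidate that can separate $u$ from the pair, hence it is forced into every feasible solution, so I place it into an accumulator set $\FF_0$ and remove it from the graph. Let $f=|\FF_0|$. After this step, for every common neighbor $u$ of every targeted pair, both incident edges lie in $\CC$.

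Next, I would build the auxiliary graph $\GG^\pr$ exactly as in the proof of \Cref{thm:etsed}, namely with one vertex $v_e$ per remaining edge $e$ of $\GG$ and an edge $\{v_e,v_f\}$ whenever there exist $a,b,c$ with $e=\{a,b\}$, $f=\{b,c\}$, and $\{a,c\}\in\SS$. I would then apply the maximal-matching $2$-approximation to \VC on $\GG^\pr$ to obtain a vertex cover $\WW$ with $|\WW|\le 2\tau(\GG^\pr)$, where $\tau(\GG^\pr)$ denotes the minimum vertex cover number of $\GG^\pr$. Finally, I would output $\FF=\FF_0\cup\{e:v_e\in\WW\}$.

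The correctness of the factor-$2$ guarantee follows directly from the size-preserving correspondence established in the proof of \Cref{thm:etsed}: vertex covers of $\GG^\pr$ are in size-preserving bijection with subsets of $\CC$ whose removal eliminates all common neighbors of pairs in $\SS$ (in the pre-processed graph). Therefore, if $\mathrm{OPT}$ denotes the optimal value of the original \ESED instance, then $\mathrm{OPT}=f+\tau(\GG^\pr)$, while our algorithm returns a solution of size $f+|\WW|\le f+2\tau(\GG^\pr)\le 2\bigl(f+\tau(\GG^\pr)\bigr)=2\,\mathrm{OPT}$. There is no real obstacle here; the only point needing minor care is to account for the forced deletions $\FF_0$ on both sides of the inequality, which ensures that the pre-processing step does not degrade the approximation ratio.
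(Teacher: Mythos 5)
Your proposal is correct and follows essentially the same route as the paper: it reuses the reduction from the proof of \Cref{thm:etsed} as an approximation-preserving reduction to \VC and invokes the standard polynomial-time $2$-approximation for \VC. The only difference is that you explicitly account for the forced deletions $\FF_0$ from the preprocessing step, a detail the paper leaves implicit but which your accounting $\mathrm{OPT}=f+\tau(\GG^\pr)$ handles correctly.
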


We next show that the \ESED problem is polynomial time solvable if there exists a subset $\WW\subseteq\VV$ of vertices such that the set \SS of given pairs is $\{\{u,v\}:u,v\in\WW,u\ne v\}$.

\begin{theorem}\label{thm:esed_spl_poly}\shortversion{$[\star]$}
	Suppose, in the \ESED problem, there exists a set \WW of (important) vertices such that the set \SS of pair of vertices between which we wish to eliminate similarity is the set of all pair of vertices in \WW (that is, $\SS=\{\{u,v\}:u,v\in\WW,u\ne v\}$). Then the \ESED problem is polynomial time solvable.
\end{theorem}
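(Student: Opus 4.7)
The plan is to reformulate the special case as a degree constraint and then decompose it into two independent polynomial-time subproblems. The key observation is that $\SS$ equals all pairs from $\WW$ if and only if the following condition holds in the post-deletion graph: for every vertex $w \in \VV$, we have $|\NN(w) \cap \WW| \le 1$. Indeed, if some vertex $w$ has two neighbors $u, v \in \WW$, then $\{u, v\} \in \SS$ has $w$ as a common neighbor; conversely any common neighbor of a pair in $\SS$ witnesses the violation. So the question becomes: can we delete at most $k$ edges from $\CC$ so that every vertex has at most one neighbor in $\WW$?

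Next, I would observe that the constraints decompose cleanly along the bipartition $\VV = \WW \cup (\VV \setminus \WW)$. An edge $e \in \EE$ with both endpoints outside $\WW$ affects no constraint and can be ignored. An edge $e = \{u, v\}$ with $u \in \WW$ and $v \notin \WW$ affects only the constraint for $v$. An edge $e = \{u, v\}$ with $u, v \in \WW$ affects only the constraints for $u$ and $v$ (both in $\WW$). Hence the constraints for vertices outside $\WW$ involve a disjoint set of edges from those for vertices inside $\WW$, so the two optimisation problems can be solved independently and their optimal deletion counts summed.

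For the first subproblem (vertices $u \in \VV \setminus \WW$), the constraints for different $u$ share no edges, so the optimum is computed greedily: for each $u$ with $d_\WW(u) := |\NN(u) \cap \WW| \ge 2$, we must delete at least $d_\WW(u) - 1$ of the $d_\WW(u)$ edges from $u$ to $\WW$. If fewer than $d_\WW(u) - 1$ of these edges lie in $\CC$, the instance is infeasible; otherwise deleting exactly $d_\WW(u) - 1$ suffices, giving cost $\sum_{u \notin \WW} \max(0, d_\WW(u) - 1)$.

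For the second subproblem, let $H = \GG[\WW]$; we need to delete edges of $H$ (only those in $\CC$) so that $H$ becomes a graph of maximum degree one, i.e., a matching. Letting $E_0$ be the non-deletable edges of $H$, the problem is equivalent to finding a maximum matching $M$ in $H$ with $E_0 \subseteq M$; the number of deletions is then $|E(H)| - |M|$. If $E_0$ itself is not a matching, the instance is infeasible; otherwise let $V_0$ denote the vertices covered by $E_0$, and extend $E_0$ by any maximum matching of $H - V_0$, which is computable in polynomial time by Edmonds' algorithm. Finally, add the two costs, declare \YES if the sum is at most $k$ and \NO otherwise. The main conceptual step is the decomposition argument; the mildly technical step is the reduction of the $\WW$-side problem to maximum matching with forced edges, but this is standard.
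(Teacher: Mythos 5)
Your proposal is correct and follows essentially the same route as the paper's proof: keep at most one edge from each vertex of $\VV\setminus\WW$ into $\WW$ (greedy, per-vertex) and reduce $\GG[\WW]$ to a maximum matching, then compare the total number of deletions with $k$. The only difference is that you explicitly handle the candidate set $\CC$ (via infeasibility checks and a maximum matching forced to contain the non-deletable edges of $\GG[\WW]$), a point the paper's own proof silently glosses over, so your write-up is in fact slightly more careful on that detail.
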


\longversion{
\begin{proof}
	Let $\GG=(\VV,\EE)$ be the input graph. We build the solution \XX (set of edges to be removed) iteratively. The set \XX is initialized to $\emptyset$. For every vertex $u\in\VV\setminus\WW$, we put all the edges between $u$ and every vertex in \WW except one edge in \XX. Let \MM be a maximum matching of $\GG[\WW]$. We put all the edges in $\EE[\WW]\setminus\MM$ in \XX where $\EE[\WW]$ is the set of edges in \GG with both end points in \WW. If the size of \XX exceeds the number $k$ of edges that we are allowed to delete, then we output \NO. Otherwise we output \YES with \XX being a set of at most $k$ edges whose deletion removes similarity between every pair of vertices in \WW. Clearly any solution would remove at least $|\XX|$ number of edges otherwise either there will be a vertex in $u\in\VV\setminus\WW$ which has edges to at least $2$ vertices in \WW (which is a contradiction) or there will be an induced path of length $2$ in $\GG[\WW]$ (which again is a contradiction since there exists pair of vertices in \WW having common neighbor). Hence, if the algorithm outputs \NO, the instance is indeed a \NO instance. On the other hand, if the algorithm outputs \YES, it discovers a set of at most $k$ edges whose removal ensures that no pair of vertices in \WW has any common neighbor. This concludes the correctness of the algorithm.
\end{proof}
}

We next focus on the parameter $|\SS|$. We use the following result by Lenstra to design our \FPT algorithms.

\begin{lemma}[Lenstra's Theorem~\cite{CyganEtAl}]\label{lenstra}
	There is an algorithm for computing a feasible as well as an optimal solution of an integer linear program which is fixed parameter tractable parameterized by the number of variables.
\end{lemma}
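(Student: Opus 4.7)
The plan is to prove this classical theorem, due to Lenstra, by a dimension-reducing recursion driven by the Khintchine flatness theorem. I will first reduce optimization to feasibility: given $A\in\ZB^{m\times n}$ and $b\in\ZB^m$, decide whether there exists $x\in\ZB^n$ with $Ax\le b$. A standard binary search on the objective value, using the bit-length $L$ of the input to bound the search range, reduces optimization to a polynomial number of feasibility queries, so it suffices to handle feasibility in time $f(n)\cdot\text{poly}(L)$.

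Given the polytope $P=\{x\in\RB^n:Ax\le b\}$, the plan is to distinguish a \emph{fat} case and a \emph{flat} case. First I would run the ellipsoid method to compute, in polynomial time, an ellipsoid $E$ sandwiched between $P$ and a mild dilate of $P$. If the inner ball of $E$ has radius exceeding a threshold depending only on $n$, then $P$ is fat enough that it must contain a lattice point, which can be extracted by rounding the center of $E$ along an LLL-reduced basis of $\ZB^n$ adapted to $E$'s shape. Otherwise, one is in the flat case: by the (algorithmic) flatness theorem, there exists a nonzero $w\in\ZB^n$ with width $\max_{x\in P}w^\top x-\min_{x\in P}w^\top x\le c(n)$ for some function depending only on $n$, and such a $w$ can be produced by LLL basis reduction on the lattice dual to the shape of $E$.

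In the flat case there are at most $c(n)+1$ integer values $\gamma$ that $w^\top x$ can take on $P\cap\ZB^n$. For each such $\gamma$, intersect $P$ with the hyperplane $\{w^\top x=\gamma\}$, apply a unimodular change of coordinates sending $w$ to the first standard basis vector, and obtain an integer feasibility instance in dimension $n-1$. Recursing on these at most $c(n)+1$ children yields a search tree of depth $n$ and branching at most $c(n)+1$, with polynomial-time work per node (for ellipsoid computations and LLL). The total work is bounded by $h(n)\cdot\text{poly}(L)$ for a function $h$ depending only on $n$, which is exactly the FPT guarantee claimed.

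The main obstacle is the algorithmic form of Khintchine's flatness theorem: one needs not merely the existence of a short integer direction of bounded width, but a polynomial-time procedure that produces it. This is where the heavy machinery from the geometry of numbers — LLL basis reduction, successive minima of convex bodies, and the duality between the rounding ellipsoid of $P$ and its lattice-shortest vectors — must be invoked. Once this algorithmic flatness ingredient is available as a black box, both the branching step and the fat/flat case analysis are routine, and the recursion together with the binary-search reduction from optimization to feasibility completes the proof.
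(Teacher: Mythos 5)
The paper does not prove this statement at all: it is imported as a black box (Lenstra's theorem, cited from the parameterized-algorithms textbook of Cygan et al.) and only ever used as a subroutine in \Cref{fpt:rsed_S} and \Cref{fpt:rmsed_S}. So there is nothing in the paper to compare against line by line; the comparison is with the classical proof, and your sketch is a faithful outline of exactly that argument: reduce optimization to feasibility by binary search, round the polytope by an ellipsoid, split into a fat case (a lattice point must exist and can be found by rounding the center along a reduced basis) and a flat case (a direction $w$ of lattice width bounded by a function of $n$ only), branch on the at most $c(n)+1$ values of $w^\top x$, and recurse in dimension $n-1$, giving $f(n)\cdot\mathrm{poly}(L)$ time overall. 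Two caveats keep this at the level of a correct sketch rather than a proof. First, the binary-search reduction needs an a priori bound on the optimum (via Cramer/Hadamard bounds on vertex coordinates of the LP relaxation) and separate handling of infeasible and unbounded programs; you assert the reduction but do not supply these bounds. Second, you present the ``algorithmic Khintchine flatness theorem'' as the main missing ingredient, but Lenstra's original proof does not need Khintchine-type bounds: after the affine map sending the rounding ellipsoid to a ball, a single LLL-reduced basis of the transformed lattice either lets you round the center to a lattice point inside $P$ (fat case) or its longest reduced vector, pulled back through the transformation, certifies a width bound depending only on $n$ (flat case, with a $2^{O(n^2)}$-type bound; the sharper flatness constants only matter for Kannan-style improvements). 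So the black box you flag as the obstacle is in fact the standard LLL computation you already invoke, and with those two points filled in your route is the same as the literature's, not a new one.
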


We first consider the \RSED problem.

\begin{theorem}\label{fpt:rsed_S}
	The \RSED problem parameterized by $|\SS|$ has a fixed parameter tractable algorithm.
\end{theorem}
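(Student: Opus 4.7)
The plan is to formulate \RSED as an integer linear program (ILP) whose number of variables is bounded by a function of $|\SS|$, and then apply Lenstra's theorem (\Cref{lenstra}). Let $X:=\bigcup_{\{x,y\}\in\SS}\{x,y\}$, so $|X|\le 2|\SS|$; any $\CC$-edge not incident to $X$ cannot affect the common-neighbor count of any pair in $\SS$, so without loss of generality every edge in $\CC$ has at least one endpoint in $X$. Call a vertex \emph{internal} if it lies in $X$ and \emph{external} otherwise. Classify each external vertex $u$ by its \emph{type}, which records for every $v\in X$ whether $\{u,v\}$ is absent, lies in $\EE\setminus\CC$, or lies in $\CC$. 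There are at most $3^{|X|}\le 9^{|\SS|}$ possible types, and the count $n_\tau$ of external vertices of each type $\tau$ can be computed in polynomial time.

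Next I would introduce three families of ILP variables: for each type $\tau$ and each admissible ``action'' $\pi$ (a subset of those $v\in X$ with $\{u,v\}\in\CC$ for a representative vertex $u$ of type $\tau$), an integer $y_{\tau,\pi}\ge 0$ counting external vertices of type $\tau$ that delete exactly the $\CC$-edges from themselves to vertices in $\pi$; a binary $d_e$ for each $\CC$-edge $e$ with both endpoints in $X$; and an auxiliary binary $\alpha_{u,p}$ for each triple $(u,p)$ with $u\in X$, $p=\{x,y\}\in\SS$, and $u\in\NN(x)\cap\NN(y)$. All three families are bounded by functions of $|\SS|$ alone (the $y$-variables by at most $4^{|X|}\le 16^{|\SS|}$, the rest by $O(|\SS|^2)$). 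The constraints I would write are the assignment identity $\sum_\pi y_{\tau,\pi}=n_\tau$; the budget $\sum_{\tau,\pi}|\pi|\,y_{\tau,\pi}+\sum_e d_e\le k$; the linearization $\alpha_{u,p}\ge 1-d_{\{u,x\}}-d_{\{u,y\}}$ (a $d$-term whose edge is not in $\CC$ being replaced by $0$, and $\alpha_{u,p}$ being fixed to $1$ when both relevant edges are non-candidate); and the similarity bound $\sum_{\tau,\pi}c(\tau,\pi)\,y_{\tau,\pi}+\sum_{u,p}\alpha_{u,p}\le t$, where $c(\tau,\pi)$ is the precomputable number of pairs $\{x,y\}\in\SS$ for which the type $\tau$ indicates adjacency to both $x$ and $y$ while $\pi\cap\{x,y\}=\emptyset$.

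The main thing to verify will be that this ILP faithfully encodes \RSED. Aggregation via $y_{\tau,\pi}$ is harmless because external vertices of the same type are structurally symmetric. The $\alpha$-linearization upper-bounds the true remaining-common-neighbor indicator: when both relevant edges survive a given deletion, the constraint forces $\alpha_{u,p}=1$; otherwise $\alpha_{u,p}\in\{0,1\}$ is allowed, and a minimizer of the $\le t$ constraint will choose $0$. Hence any feasible ILP solution yields a deletion whose true similarity sum is at most the ILP sum and therefore at most $t$, while conversely any genuine solution to \RSED yields a feasible ILP assignment by setting each $\alpha_{u,p}$ to its true value. Lenstra's theorem then delivers an algorithm running in time $f(|\SS|)\cdot\mathrm{poly}(n)$, as desired.
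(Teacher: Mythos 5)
Your proposal is correct and takes essentially the same route as the paper: encode \RSED as an integer linear program whose variables count, for each vertex ``type'', how many vertices of that type delete which of their incident candidate edges, and then invoke Lenstra's theorem (\Cref{lenstra}) since the number of variables is bounded by a function of $|\SS|$ alone. The differences are only in bookkeeping --- you use finer $3$-way types recording adjacency and candidate status with respect to the $O(|\SS|)$ vertices occurring in target pairs, and handle the $O(|\SS|^2)$ internal vertices and internal candidate edges via explicit binary variables with an inequality linearization, whereas the paper types all vertices by their common-neighborhood pattern over \SS with ``participation'' patterns after a preprocessing step --- and your variant is sound.
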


\begin{proof}
	Let $(\GG=(\VV,\EE),\SS,\CC,k)$ be an arbitrary instance of \RSED. For every pair $\{x,y\}\in\SS$ and every $u\in\VV$ with $\{\{u,x\},\{u,y\}\}\in\EE$, if $|\{\{u,x\},\{u,y\}\}\cap\CC|=0$, then we output \NO; if $|\{\{u,x\},\{u,y\}\}\cap\CC|=1$ and $k\ge1$, then we remove the edge $\{\{u,x\},\{u,y\}\}\cap\CC$ and decrease $k$ by $1$. If no pair of vertices in \SS have common neighbor in the current graph, then we output \YES. Otherwise let us assume without loss of generality that, in the current instance $(\GG=(\VV,\EE),\SS,\CC,k)$, for every pair $\{x,y\}\in\SS$ and every $u\in\VV$ with $\{\{u,x\},\{u,y\}\}\in\EE$, we have $\{\{u,x\},\{u,y\}\}\subseteq\CC$.
	
	For two vertices $u$ and $v$ of \GG, we say that $u$ and $v$ are of ``same type" if we have the following --- for every pair $\{x,y\}\in\SS$ of vertices in \SS, $u$ is a common neighbor of $x$ and $y$ if and only if $v$ is a common neighbor of $x$ and $y$. Since $|\SS|=\el$, we observe that there can be at most $2^\el$ different types of vertices in \GG since types are in one-to-one correspondence with the power set $2^\SS$ of \SS. Let $\TT=\{\mu(\XX): \XX\subseteq\SS\}$ be the set of all types in \GG. For each type $\mu(\XX)\in\TT$, let $n(\XX)$ be the number of vertices of type $\mu(\XX)$ in \GG --- we observe that $n(\XX)$ can be computed in polynomial time from the given graph \GG for every type $\mu(\XX)\in\TT$. Let $v$ be any vertex in \GG of type $\mu(\XX)\in\TT$. We observe that the vertex $v$ can ``participate" in any optimal solution in only $4^{|\XX|}$ possible ways: for each pair $\{x,y\}\in\XX$ such that the vertex $v$ is a common neighbor of both $x$ and $y$, exactly one of the following $4$ events happen -- (i) both the edges $\{v,x\}$ and $\{v,y\}$ belong to the optimal set of edges (call it optimal solution), (ii) only $\{v,x\}$ belongs to the optimal solution, (iii) only $\{v,y\}$ belongs to the optimal solution, and (iv) none of $\{v,x\}$ and $\{v,y\}$ belong to the optimal solution. So vertex of each type $\mu(\XX)\in\TT$ can ``participate" in the optimal solution in at most $4^{|\XX|}$ ways; we abstractly define the set of all possible ways of participation of vertices of type $\mu(\XX)$ by $\PP(\XX)$. We will now formulate the \RSED problem using an integer linear program (ILP). For each type $\mu(\XX)\in\TT$ and each participation type $P\in\PP(\XX)$, let the variable $\ZZ(\XX;P)$ denote the number of vertices of type $\mu(\XX)$ which participate in the optimal solution like $P$. We use the variable $\YY(\{x,y\})$ in ILP to denote the number of common neighbors of $x$ and $y$ for $\{x,y\}\in\SS$ after removing the optimal set of edges. For each type $\mu(\XX)$ and every participation $P\in\PP(\XX)$, let $\lambda(\XX,P)$ denote the number of edges which gets removed in $P$; $\lambda(\XX,P)$ is a polynomial time computable fixed (depends on the input graph \GG only) quantity. We write the following ILP.
	\begin{align}
	 &\sum_{\{x,y\}\in\SS} \YY(\{x,y\}) \le t\label{eq:ilp1}\\
	 &\sum_{\mu(\XX)\in\TT,P\in\PP(\XX)} \lambda(\XX,P) \ZZ(\XX;P) \le k\label{eq:ilp2}\\
	 &\sum_{P\in\PP(\XX)}\ZZ(\XX;P) = n(\XX), & \forall \XX\in\TT\label{eq:ilp3}\\
	 &\YY(\{x,y\}) = \sum_{\mu(\XX)\in\TT,\{x,y\}\in\XX}n(\XX) - \nonumber\\
	 &\sum_{\substack{P\in\PP(\XX), \text{ at least}\\ \text{one edge on }x \text{ or }y\\ \text{gets removed in }P}} \ZZ(\XX;P) & \forall \{x,y\}\in\SS\label{eq:ilp4}
	\end{align}
	
	\Cref{eq:ilp1} along with \Cref{eq:ilp4} ensure that the sum of the number of common neighbors between pairs of vertices in \SS is at most $t$. \Cref{eq:ilp2} ensures that the number of edges deleted is at most $k$. From the discussion above, it follows that the \RSED instance is a \YES instance if and only if the above ILP is feasible.	Since the number of variables is $\OO(\el 2^\el 4^\el)=\OO(\el 8^\el)$, the result follows from \Cref{lenstra}.
\end{proof}

Due to \Cref{prop:connection}, \Cref{fpt:rsed_S} immediately gives us the following corollary.

\begin{theorem}\label{fpt:esed_S}
	The \ESED problem parameterized by $|\SS|$ has a fixed parameter tractable algorithm.
\end{theorem}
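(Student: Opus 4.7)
The plan is to observe that \ESED is essentially the special case of \RSED obtained by setting the target sum threshold to $t=0$, and therefore the FPT algorithm for \RSED parameterized by $|\SS|$ from \Cref{fpt:rsed_S} transfers to \ESED without any blow-up in the parameter.

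More concretely, given an arbitrary instance $(\GG,\SS,\CC,k)$ of \ESED, I would form the instance $(\GG,\SS,\CC,k,0)$ of \RSED by simply appending $t=0$. The equivalence is immediate: requiring that the sum over all pairs $\{x,y\}\in\SS$ of the number of common neighbors of $x$ and $y$ in $\GG\setminus\FF$ be at most $0$ is the same as requiring that every such pair has no common neighbor in $\GG\setminus\FF$, since these counts are nonnegative integers. This is precisely the reduction referenced in \Cref{prop:connection}, so no new correctness argument is needed.

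Crucially, the reduction preserves the target set \SS verbatim, so the parameter $|\SS|$ transfers one-to-one. Applying the FPT algorithm of \Cref{fpt:rsed_S} to $(\GG,\SS,\CC,k,0)$ then solves the original \ESED instance within the same $f(|\SS|)\cdot\mathrm{poly}(n)$ running time bound, which establishes the theorem.

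There is essentially no obstacle here beyond checking that the reduction of \Cref{prop:connection} is parameter-preserving; the only point worth flagging is that one must not accidentally inflate the parameter when invoking \Cref{fpt:rsed_S} (for instance, by blowing up \SS during the reduction), and the trivial choice $t=0$ avoids this entirely.
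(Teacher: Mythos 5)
Your proposal is correct and matches the paper's own argument: the paper also obtains this theorem directly from \Cref{fpt:rsed_S} via the reduction of \Cref{prop:connection}, i.e.\ by setting $t=0$, which preserves \SS and hence the parameter. Nothing further is needed.
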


Also the idea of \Cref{fpt:rsed_S} can be analogously used to obtain a fixed parameter tractable algorithm for the \RMSED problem.

\begin{theorem}\label{fpt:rmsed_S}\shortversion{$[\star]$}
	The \RMSED problem parameterized by $|\SS|$ has a fixed parameter tractable algorithm.
\end{theorem}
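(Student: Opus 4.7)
The plan is to follow the template of \Cref{fpt:rsed_S} almost verbatim, changing only the part of the ILP that encodes the ``budget'' on the common-neighbor counts. Concretely, I would first run the same preprocessing: for every pair $\{x,y\}\in\SS$ and every common neighbor $u$ of $x$ and $y$, if neither $\{u,x\}$ nor $\{u,y\}$ lies in $\CC$ we immediately output \NO (provided the current threshold for the pair $\{x,y\}$ is already violated), and if exactly one of them lies in $\CC$ we either leave that common neighbor alone (if it does not push the count above $t$) or force the unique removable edge into the solution. After preprocessing I assume, as in \Cref{fpt:rsed_S}, that for every pair $\{x,y\}\in\SS$ and every common neighbor $u$, both $\{u,x\}$ and $\{u,y\}$ belong to $\CC$.

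Next I would reuse the notion of \emph{type}: two vertices have the same type if they participate as a common neighbor in exactly the same subset $\XX\subseteq\SS$ of pairs. There are at most $2^{|\SS|}$ types; let $n(\XX)$ be the number of vertices of type $\mu(\XX)$, polynomial-time computable. For each type $\mu(\XX)$ a vertex can participate in a solution in at most $4^{|\XX|}$ ways, giving an abstract set $\PP(\XX)$ of participations with associated edge-removal cost $\lambda(\XX,P)$. Let $\ZZ(\XX;P)$ count the vertices of type $\mu(\XX)$ that are handled according to $P$, and $\YY(\{x,y\})$ be the number of surviving common neighbors of $x$ and $y$. The total number of variables is $\OO(\el\,8^\el)$ where $\el=|\SS|$.

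The only modification from the \RSED ILP is to replace \Cref{eq:ilp1} by the family of per-pair constraints
\begin{align*}
\YY(\{x,y\}) \le t, &\qquad \forall\, \{x,y\}\in\SS,
\end{align*}
while keeping the budget constraint \Cref{eq:ilp2}, the counting constraint \Cref{eq:ilp3}, and the definition \Cref{eq:ilp4} of $\YY(\{x,y\})$ unchanged. By the same argument as in \Cref{fpt:rsed_S}, the \RMSED instance is a \YES instance if and only if this ILP is feasible: feasibility yields a choice of $\lfloor k\rfloor$ edges whose removal leaves at most $t$ common neighbors for every pair, and conversely any valid deletion set induces a feasible assignment of the $\ZZ(\XX;P)$ variables. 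An application of Lenstra's theorem (\Cref{lenstra}) on $\OO(\el\,8^\el)$ variables then yields the claimed fixed-parameter tractability.

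The proof is essentially routine given \Cref{fpt:rsed_S}; the only point that needs a sentence of justification is that passing from a single sum constraint to $|\SS|$ individual upper-bound constraints neither increases the number of ILP variables nor breaks the bijection between ILP solutions and edge-deletion solutions, so the parameter dependence is unchanged. I do not anticipate any real obstacle beyond carefully rewriting the equivalence between feasible ILP solutions and valid edge-deletion sets in this ``per-pair'' setting.
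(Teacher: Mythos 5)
Your proposal is correct and follows essentially the same route as the paper: keep the type-based ILP of \Cref{fpt:rsed_S} and replace the single sum constraint by per-pair upper bounds, then invoke Lenstra's theorem. The paper phrases this with an auxiliary variable $\Gamma$ (constraints $\Gamma\le t$ and $\YY(\{x,y\})\le\Gamma$ for all $\{x,y\}\in\SS$), which is trivially equivalent to your direct constraints $\YY(\{x,y\})\le t$.
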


\longversion{
\begin{proof}
	We replace \Cref{eq:ilp1} with the following two inequalities. The rest of the argument is analogous to the proof of \Cref{fpt:rsed_S}.
	\begin{align*}
		& \Gamma \le t\\
		& \YY(\{x,y\}) \le \Gamma & \forall \{x,y\}\in\SS
	\end{align*}
\end{proof}
}

We next consider maximum degree of any node as our parameter. We show that \RMSED admits an \FPT algorithm with respect to maximum degree as parameter.

\begin{theorem}\label{fpt:rsed_D}
 There is an algorithm for the \RSED problem with running time $\OO(2^\Delta\text{poly}(n))$ where $\Delta$ is the maximum degree of the input graph.
\end{theorem}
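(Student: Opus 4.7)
The plan is to reduce the problem to a multiple-choice two-dimensional knapsack with $2^\Delta$-sized per-item menus by exploiting the degree bound, and then solve that knapsack by polynomial-time dynamic programming. Set $\VV_\SS=\bigcup_{\{x,y\}\in\SS}\{x,y\}$ and let $U=\{u\in\VV: u\in\NN(x)\cap\NN(y)\text{ for some }\{x,y\}\in\SS\}$ be the set of candidate common neighbours in the input graph. The first observation is that only edges of the form $\{u,x\}$ with $u\in U$ and $x\in\NN(u)\cap\VV_\SS$ can affect the similarity of any pair in $\SS$, so I restrict attention to such \emph{useful} edges; removing any other edge only wastes budget.

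For each $u\in U$ let $N_u=\NN(u)\cap\VV_\SS$, so $|N_u|\le\Delta$, and call a subset $S\subseteq N_u$ a \emph{configuration} at $u$ if $\{u,x\}\in\CC$ for every $x\in N_u\setminus S$; the interpretation is that $S$ is the set of $\VV_\SS$-neighbours of $u$ that survive after deletion. There are at most $2^\Delta$ feasible configurations per vertex, and each is computable in polynomial time. Given $S$, the vertex $u$ contributes exactly $c_u(S):=|\{\{x,y\}\in\SS:x,y\in S\}|$ to the total similarity, since $u$ is a common neighbour of $\{x,y\}\in\SS$ in the surviving graph iff both $\{u,x\}$ and $\{u,y\}$ survive iff $x,y\in S$. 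Consequently the total similarity equals $\sum_{u\in U}c_u(S_u)$, a sum that decomposes over $U$.

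I then combine the local choices via a standard two-dimensional knapsack dynamic program. Enumerate $U=\{u_1,\ldots,u_r\}$ and let $D(i,j)$ be the minimum of $\sum_{\ell\le i}c_{u_\ell}(S_{u_\ell})$ over choices of configurations whose edge deletions use exactly $j$ units of budget; fill $D$ by $D(i,j)=\min_{S}\bigl(D(i-1,j-\mathrm{cost}(S))+c_{u_i}(S)\bigr)$. Without loss of generality $k\le m$ and $t\le|\SS|\cdot\Delta$, both polynomially bounded in $n$, so the table has $O(r\cdot k)=\mathrm{poly}(n)$ cells and each cell requires examining the $\le 2^\Delta$ configurations of $u_i$. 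The table is filled in $\OO(2^\Delta\mathrm{poly}(n))$ time, and the instance is \YES iff $\min_{j\le k}D(r,j)\le t$.

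The main subtlety, and the part that will require most care to write out, is that an edge $\{u,v\}$ with both endpoints in $U\cap\VV_\SS$ sits in two per-vertex groups at once, so its deletion must be charged to the budget only once and the two endpoints' configurations must agree on whether it is kept. This is handled by orienting each such shared edge towards a designated endpoint in the enumeration order, charging its deletion to that endpoint alone, and propagating the $\le\Delta$ committed decisions to the other endpoint through an additional $2^\Delta$-sized local state in the DP; this preserves correctness and keeps the overall bound within $\OO(2^\Delta\mathrm{poly}(n))$.
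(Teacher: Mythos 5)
The decomposition of the total similarity as $\sum_{u\in U}c_u(S_u)$ and the restriction to useful edges are fine, and your overall plan is in the same family as the paper's argument (a sequential dynamic program in which, at each vertex, one guesses among at most $2^\Delta$ possibilities for its deleted incident edges, with a table indexed by a vertex prefix, the budget used, and the similarity accounted for). The genuine gap is exactly at the point you flag and then dismiss in one sentence: the per-vertex menus are not independent, and the proposed repair --- orient each shared edge towards one endpoint, charge it there, and propagate ``the $\le\Delta$ committed decisions'' through ``an additional $2^\Delta$-sized local state'' --- does not work. When the DP reaches a vertex $u_j$, the decisions it needs were made at possibly many different earlier steps, one for each shared edge joining $u_j$ to the processed prefix; symmetrically, after processing a prefix the DP must remember the keep/delete status of every shared edge with exactly one endpoint in the prefix, because each of them will be consulted by some later menu. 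This pending set is a cut of the graph formed by the shared edges (edges with both endpoints in $U\cap\VV_\SS$), and its size is not bounded by any function of $\Delta$: already for $\Delta=3$ the shared edges can form a bounded-degree expander with $\Omega(n)$ edges, so under every enumeration order some prefix has $\Omega(n)$ pending decisions, and remembering them requires $2^{\Omega(n)}$ states, not a $2^\Delta$-sized local state.

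This is a soundness issue, not merely an efficiency one. Take $\SS=\{\{u,a\},\{v,b\}\}$ with $v$ adjacent to $u$ and $a$, and $u$ adjacent to $v$ and $b$; the edge $\{u,v\}$ lies in both menus, since $u$'s contribution to the pair $\{v,b\}$ and $v$'s contribution to the pair $\{u,a\}$ both depend on it. If the edge is charged at $u$ and $v$'s menu is later evaluated without knowledge of that choice, the DP can keep the edge at $u$ (paying nothing) while treating it as absent at $v$ (collecting the similarity decrease), thereby certifying total similarity $1$ at cost $0$, which is unachievable; enforcing agreement instead is precisely what needs the non-local memory described above. So as written, your argument only establishes the bound in the special case where no edge has both endpoints in $U\cap\VV_\SS$, and the general case --- which is the crux of the theorem --- is unproved. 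For comparison, the paper's proof does not attempt a knapsack over independent menus: it defines a Boolean table $\TT(i,k',t')$ over the prefix graphs $\GG_i=\GG[\{v_1,\dots,v_i\}]$, quantifying over edge sets inside $\GG_i$ and counting only common neighbours lying in the prefix, and at step $i$ it guesses the deleted edges incident on $v_i$. To complete your write-up you would need either to adopt and carefully verify bookkeeping of that kind, or to give a different decomposition in which every candidate edge occurs in exactly one menu; the one-sentence ``local state'' fix cannot be made to work as stated.
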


\begin{proof}
 Let $(\GG=(\VV,\EE),\SS,\CC,k,t)$ be an arbitrary instance of \RMSED. For each pair $\{x,y\}\in\SS$, we observe that $x$ and $y$ can have at most $\Delta$ common neighbors. Let $\NN_\EE(x)$ and $\NN_\EE(y)$ be the set of edges incident on respectively $x$ and $y$. Then we have $|\NN_\EE(x)\cup\NN_\EE(y)|\le 2\Delta$. Suppose $\VV=\{v_j: j\in[n]\}$ and for $i\in[n]$, we define $\VV_i=\{v_j: j\in[i]\}$ and $\GG_i=\GG[\VV_i]$. We now describe a dynamic programming based algorithm. Our dynamic programming table \TT is a Boolean table indexed by the set $\{(i,k^\pr,t^\pr): i\in[n], k^\pr\in[k], t^\pr=[t]\}$. We define $\TT(i,k^\pr,t^\pr)$ to be \true if and only if there exists a set $\FF_i\subseteq\EE[\GG_i]$ of $k^\pr$ edges whose removal from \GG makes the total number of common neighbor in $\VV_i$ between pairs of vertices in \SS to be at most $t^\pr$. Formally, $\TT(i,k^\pr,t^\pr)=\true$ if and only if $\exists \FF_i\subseteq\EE[\GG_i], |\FF_i|\le k^\pr$ such that $\sum_{\{x,y\}\in\SS} \NN(x)\cap\NN(y)\cap\VV_i^\pr \le t^\pr$ in $\GG\setminus\FF_i$. We initialize the table entries $\TT(1,k^\pr,t^\pr)$ to be \false for every $k^\pr\in[k]$ and $t^\pr\in[t]$ and initialize $\TT(1,k^\pr,0)$ to be \true for every $k^\pr\in[k]$. To update an entry $\TT(i,k^\pr,t^\pr)$, we guess the edges incident on $v_i$ that will be part of an optimal solution. Formally, we update $\TT(i,k^\pr,t^\pr)$ as follows. For any $X\subseteq N(v_i)$, we define $\Gamma(X,v_i)$ to be $|\{\{x,y\}\in\SS: x\in X \text{ or } y\in X\}|$.
 \begin{align*}
 \TT(i,k^\pr,t^\pr)
 = \bigvee_{\substack{X\subseteq N(v_i)}}
 \TT\left(i-1, k^\pr-|X|, t^\pr - \Gamma(X,v_i)\right)
 \end{align*}
 For convenience, we define the logical OR of no variables to be \false. We output that the \RMSED instance is a \YES instance if and only if $\TT(n,k,t)$ is \true. The correctness of our algorithm is immediate from our dynamic programming formulation and update rule. We observe that our dynamic programming table has $nkt$ entries each of which can be updated in $\OO(2^\Delta\text{poly}(n))$ time. Hence the running time of our algorithm is $\OO(2^\Delta\text{poly}(n))$.
\end{proof}

Due to \Cref{prop:connection}, \Cref{fpt:rsed_D} immediately implies existence of an \FPT algorithm for the \ESED problem parameterized by the maximum degree $\Delta$ of the graph.

\begin{theorem}\label{fpt:esed_D}
	There is an algorithm for the \ESED problem with running time $\OO(2^\Delta\text{poly}(n))$ where $\Delta$ is the maximum degree of the input graph.
\end{theorem}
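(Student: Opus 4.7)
The plan is to obtain this result as a direct corollary of \Cref{fpt:rsed_D} via the many-to-one reduction stated in \Cref{prop:connection}. Since \ESED is precisely the special case of \RSED in which the target sum of common neighbors is required to be zero, the reduction is essentially trivial: given an \ESED instance $(\GG,\SS,\CC,k)$, I would form the \RSED instance $(\GG,\SS,\CC,k,0)$ and invoke the algorithm of \Cref{fpt:rsed_D}.

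First I would verify that the two instances are equivalent: a set $\FF\subseteq\CC$ with $|\FF|\le k$ witnesses a \YES answer for the \ESED instance iff no pair in \SS has any common neighbor in $\GG\setminus\FF$, which is the same as saying that the total number of common neighbors summed over pairs in \SS is $0\le t=0$, i.e. $\FF$ witnesses \YES for the constructed \RSED instance. Second, I would observe that the transformation takes constant time, preserves the underlying graph \GG, and hence preserves both the vertex count $n$ and the maximum degree $\Delta$.

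Finally I would appeal directly to \Cref{fpt:rsed_D}, which solves \RSED in $\OO(2^\Delta\,\text{poly}(n))$ time; since we feed it an instance with the same $\Delta$ and $n$, the resulting algorithm for \ESED runs within the same bound. No additional technical obstacle arises: the entire argument is an application of \Cref{prop:connection} together with \Cref{fpt:rsed_D}, and the only thing worth being explicit about is that setting $t=0$ in the definition of \RSED recovers exactly the objective of \ESED.
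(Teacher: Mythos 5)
Your proposal is correct and follows exactly the route the paper takes: the paper also derives \Cref{fpt:esed_D} as an immediate consequence of \Cref{prop:connection} (i.e., the trivial reduction setting $t=0$) together with the $\OO(2^\Delta\text{poly}(n))$ algorithm of \Cref{fpt:rsed_D}. Your added explicit check that the reduction preserves the graph, and hence $n$ and $\Delta$, is a reasonable bit of extra care but does not change the argument.
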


\section{Hardness Results}

In this section we present our algorithmic hardness results. We begin with showing that the \RSED problem is \WOH parameterized by the number $k$ of edges that we are allowed to delete even for star graphs. For that, we present an fpt-reduction from the \PVC problem parameterized by budget to the \RSED problem. The \PVC problem is defined as follows.

\begin{definition}[\PVC]
 Given a graph \GG and two integers $k$ and $s$, compute if there exist $k$ vertices which cover at least $s$ edges. We denote an arbitrary instance of \PVC by $(\GG,k,s)$.
\end{definition}

We know that the \PVC problem parameterized by the number $k$ of vertices that we are allowed to choose is \WOH~\cite{CyganEtAl}.

\begin{theorem}\label{thm:rsed_k_woh}
	The \RSED problem parameterized by $k$ is \WOH even for stars.
\end{theorem}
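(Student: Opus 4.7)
The plan is to give an fpt-reduction from \PVC parameterized by the number of chosen vertices to \RSED on stars, with the budget $k$ preserved. The underlying intuition: in a star $K_{1,n}$ with center $c$ and leaves $L$, every pair of distinct leaves $u,v$ has exactly one common neighbor, namely $c$. Deleting an edge $\{c,v\}$ destroys the common-neighbor relation for \emph{exactly} those pairs in $\SS$ that contain $v$. So choosing $k$ star-edges to delete corresponds to choosing $k$ leaves to ``hit'' the pairs in $\SS$, which is precisely vertex-covering in an auxiliary graph.

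Formally, given a \PVC instance $(H=(V_H,E_H), k, s)$ with $V_H=\{u_1,\dots,u_n\}$ and $|E_H|=m$, I would build the \RSED instance $(\GG,\SS,\CC,k',t)$ as follows. Let $\GG$ be the star with center $c$ and leaves $v_1,\dots,v_n$, one per vertex of $H$. Set
\[
\SS \;=\; \{\{v_i,v_j\} : \{u_i,u_j\}\in E_H\}, \qquad \CC \;=\; \EE(\GG), \qquad k'=k, \qquad t = m-s.
\]
Since $|\CC|=n$ and the only common neighbor any two leaves can share is $c$, for any $\FF\subseteq\CC$ and any $\{v_i,v_j\}\in\SS$, the leaves $v_i$ and $v_j$ have a common neighbor in $\GG\setminus\FF$ iff neither $\{c,v_i\}$ nor $\{c,v_j\}$ lies in $\FF$. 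Writing $W=\{u_i : \{c,v_i\}\in\FF\}\subseteq V_H$, the total number of common-neighbor pairs in $\GG\setminus\FF$ summed over $\SS$ equals the number of edges of $H$ \emph{not} covered by $W$, i.e. $m-|\{e\in E_H : e\cap W\ne\emptyset\}|$.

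Equivalence is then immediate. If $(H,k,s)$ is a \YES instance, pick $W\subseteq V_H$ of size $k$ covering $\ge s$ edges, and let $\FF$ be the corresponding $k$ star-edges; then the total number of surviving common-neighbor pairs is $\le m-s = t$. Conversely, from any feasible $\FF\subseteq\CC$ of size $\le k$, taking $W$ as above yields a $k$-vertex set in $H$ that covers at least $m-t = s$ edges. The parameter is preserved ($k'=k$), and the whole construction is clearly polynomial-time computable; combined with the \WO-hardness of \PVC parameterized by $k$ from \cite{CyganEtAl}, this yields \WOH of \RSED parameterized by $k$, even when the input graph is restricted to stars.

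There is no real obstacle here; the one thing to be careful about is simply that the pre-processing performed at the beginning of \Cref{thm:etsed}/\Cref{fpt:rsed_S} cannot be invoked to trivialise the instance --- which is fine because $\CC$ consists of \emph{all} edges of the star, so for every $\{v_i,v_j\}\in\SS$ and the (unique) common neighbour $c$, we have $\{\{c,v_i\},\{c,v_j\}\}\subseteq\CC$, and no forced deletions occur. This completes the sketch.
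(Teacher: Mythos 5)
Your reduction is exactly the one the paper uses: a star whose leaves correspond to the vertices of the \PVC instance, $\SS$ encoding the edges, $\CC$ all star edges, $k'=k$ and $t=m-s$, with the same two-directional equivalence argument. The proof is correct and essentially identical to the paper's proof of \Cref{thm:rsed_k_woh}.
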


\begin{proof}
	We exhibit an fpt-reduction from \PVC parameterized by the number of vertices that we are allowed to the \RSED problem parameterized by $k$. Let $(\GG=(\VV,\EE),k,s)$ be an arbitrary instance of \PVC. We construct an instance $(\GG^\pr=(\VV^\pr,\EE^\pr),\SS,\CC,k^\pr,t)$ of \RSED as follows. Let $m$ be the number of edges in \GG.
	\begin{align*}
		\VV^\pr &= \{u_v: v\in\VV\} \cup \{r\}\\
		\EE^\pr &= \{\{r,u_x\},\{r,u_y\}:\{x,y\}\in\EE \}\\
		\SS &= \{\{u_x,u_y\}: \{x,y\}\in\EE\}\\
		\CC &= \EE^\pr, k^\pr = k, t=m-s
	\end{align*}
	
	We observe that the number of pairs in \SS is $m$. Also, every pair of vertices in \SS has exactly one common neighbor namely $r$. We claim that the \PVC instance is a \YES instance if and only if the \RSED instance is a \YES instance.
	
	In one direction, let us assume that the \PVC instance is a \YES instance. Let $\WW\subseteq\VV$ be a subset of vertices which covers at least $s$ edges of \GG. Then the set $\FF=\{\{r,u_v\}: v\in\WW\}\subseteq\EE^\pr$ of edges makes the common neighborhood of every pair in $\{\{u_x,u_y\}:\{x,y\}\in\EE, x\in\WW \text{ or } y\in\WW\}$ empty. Since \WW covers at least $s$ edges in \GG, it follows that the sum of number of common neighbors between vertices in \SS in $\GG\setminus\FF$ is at most $m-s$ and thus the \RSED instance is a \YES instance.
	
	On the other direction, let us assume that the \RSED instance is a \YES instance. Let $\FF\subseteq\EE^\pr$ be a set of edges such that the sum of the number of common neighbors between vertices in \SS in $\GG\setminus\FF$ is at most $m-s$. Let us consider $\WW=\{v\in\VV:\{r,u_v\}\in\FF\}\subseteq\VV$. It follows that \FF covers every edge in $\{\{x,y\}\in\EE:\{r,u_x\}\in\FF\text{ or }\{r,u_y\}\in\FF\}$ which has at least $s$ edges since the sum of number of common neighbors between vertices in \SS in $\GG\setminus\FF$ is at most $m-s$. Hence the \PVC instance is a \YES instance.\longversion{
	
	Since the above reduction is an fpt-reduction, the result follows.}
\end{proof}

In the proof of \Cref{thm:rsed_k_woh}, we also exhibit an approximation preserving reduction from \VC (set $s=m$) to the optimization version of the \ESED problem where the goal is to remove a minimum number of edges. Since \VC is known to be inapproximable within factor $(2-\eps)$ for any $\eps>0$ in polynomial time under Unique Games Conjecture (UGC), we immediately have the following corollary (see \cite{DBLP:books/daglib/0004338} for example).

\begin{corollary}\label{cor:etsed_2aprox_lb}
	For every $\eps>0$, there does not exist any polynomial time algorithm for approximating $k$ for the \ESED problem within a factor of $(2-\eps)$ unless UGC fails.
\end{corollary}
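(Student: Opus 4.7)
The plan is to revisit the reduction given in the proof of \Cref{thm:rsed_k_woh} and observe that, when specialized to $s=m$, it becomes an approximation-preserving reduction from \VC to the minimization version of \ESED. Indeed, with $s=m$ we force every edge of $\GG$ to be covered, i.e.\ we are solving \VC on $\GG$; and the corresponding \RSED parameter becomes $t = m-s = 0$, which is exactly the condition defining \ESED (no common neighbor is allowed between any pair in \SS).

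Now I would make explicit that the reduction preserves the optimum value. In the constructed instance $(\GG^\pr,\SS,\CC,k^\pr,0)$ every pair $\{u_x,u_y\}\in\SS$ has exactly one common neighbor, namely $r$, so eliminating it requires deleting at least one of the edges $\{r,u_x\}, \{r,u_y\}$; conversely, deleting such an edge $\{r,u_v\}$ corresponds to picking $v$ in the \VC instance, and the correspondence $\FF \leftrightarrow \WW=\{v:\{r,u_v\}\in\FF\}$ is a size-preserving bijection between feasible \ESED deletion sets and vertex covers of $\GG$. Hence the minimum \ESED budget on $(\GG^\pr,\SS,\CC,\cdot,0)$ equals the minimum vertex cover of $\GG$.

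Given this equality of optima, any polynomial-time $(2-\eps)$-approximation for optimizing $k$ in \ESED would transport, via the same reduction, to a polynomial-time $(2-\eps)$-approximation for \VC: compute the constructed instance, run the hypothetical \ESED approximation to obtain $\FF$, and return $\WW=\{v:\{r,u_v\}\in\FF\}$, whose size is at most $(2-\eps)$ times the optimal vertex cover. Since \VC is known to be inapproximable within any factor $(2-\eps)$ in polynomial time under the Unique Games Conjecture, the corollary follows.

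The proof is essentially a bookkeeping observation on top of the already-constructed reduction; the only real step is verifying that the bijection between solutions is size-exact (so no additive or multiplicative slack creeps in), which is immediate because every pair in \SS shares only the single neighbor $r$ and thus the \ESED optimum equals $|\WW^\star|$ exactly. There is no substantive obstacle beyond stating this transfer carefully.
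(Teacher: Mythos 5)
Your proposal is correct and matches the paper's own argument: the paper likewise specializes the reduction of \Cref{thm:rsed_k_woh} to $s=m$ (so $t=0$, i.e.\ an \ESED instance) and invokes the UGC-based $(2-\eps)$ inapproximability of \VC, noting the reduction is approximation preserving. Your write-up merely makes the size-preserving correspondence between deletion sets and vertex covers explicit, which is exactly the bookkeeping the paper leaves implicit.
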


We next show that the \RMSED problem is \WTH parameterized by $k$. Towards that, we use the a specialization of the set cover problem where every element of the universe appears in the same number of sets. We first show that the set cover problem with this assumption still continues to be \WTH and then present an \FPT-reduction from it to our problem.

\longversion{
\begin{definition}[\USC]
	Given an universe \UU, a collection $\DD\subseteq 2^\UU$ of subsets of \UU such that every element $u\in\UU$ appears in the same number of sets in \SS, and a budget $b$, compute if there exists a sub-collection $\WW\subseteq\DD$ such that (i) $|\WW|\le b$ and (ii) $\cup_{A\in\WW} A = \UU$. We denote an arbitrary instance of \USC by $(\UU,\DD,b)$.
\end{definition}

The \SC problem is the same as the \USC problem except the fact that elements in the universe \UU can belong to any number of sets in \SS in \SC. It is known that the \SC problem parameterized by $b$ is \WTH~\cite{CyganEtAl}.

\begin{proposition}
	The \USC problem parameterized by $b$ is \WTH.
\end{proposition}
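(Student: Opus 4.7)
The plan is to give a parameterized reduction from \SC parameterized by the budget $b$ (already noted in the excerpt to be \WTH \cite{CyganEtAl}) to \USC parameterized by the same budget $b$. The reduction will pad each element of the ground set up to a common frequency, while preserving both the answer and the parameter.

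Given an arbitrary \SC instance $(\UU,\DD,b)$, I would set $f(u) := |\{S \in \DD : u \in S\}|$ and $F := \max_{u \in \UU} f(u)$, assuming WLOG that $f(u) \ge 1$ for every $u \in \UU$ (otherwise output a trivial \NO instance). I would then construct a \USC instance $(\UU',\DD',b')$ by taking $\UU' := \UU$, $b' := b$, and $\DD'$ equal to $\DD$ together with, for every $u \in \UU$ with $f(u) < F$, exactly $F - f(u)$ extra singleton sets $\{u\}$. By construction every element $u \in \UU'$ appears in $f(u) + (F - f(u)) = F$ sets of $\DD'$, so the instance is indeed uniform; the whole construction clearly runs in polynomial time and the parameter $b' = b$ is preserved, so this is an fpt-reduction.

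For correctness, the forward direction is immediate: any set cover $\WW \subseteq \DD$ of size at most $b$ is still a uniform set cover of $\UU' = \UU$ when viewed inside $\DD'$. For the reverse direction, given a uniform set cover $\WW' \subseteq \DD'$ of size at most $b$, I would replace each padding singleton $\{u\} \in \WW'$ by an arbitrary original set $T_u \in \DD$ containing $u$ (which exists because $f(u) \ge 1$). The resulting subfamily $\WW \subseteq \DD$ has $|\WW| \le |\WW'| \le b$ and still covers every element of $\UU$: any element previously covered by a singleton in $\WW'$ is now covered by the corresponding $T_u$, and every other element remains covered by the original sets inherited from $\WW'$.

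The main subtlety I anticipate is purely combinatorial: if the problem insists that $\DD'$ be a genuine family of distinct subsets of $\UU'$ rather than a multi-collection, then distinct copies of $\{u\}$ cannot coexist inside $\DD'$. I would handle this by attaching a private fresh dummy element $d_u^i$ to the $i$-th padding set for $u$, turning each copy into the distinct set $\{u, d_u^i\}$, and then running one short balancing pass on the dummies (an $F$-regular bipartite-incidence design on the $d_u^i$, constructible in polynomial time) to lift every dummy's frequency to $F$ as well. Since every dummy-balancing set contains only dummies and no element of $\UU$, the swap-based reverse direction still carries through after a small, explicitly controllable additive correction to $b'$, so the reduction remains an fpt-reduction and the \WTH hardness of \USC follows.
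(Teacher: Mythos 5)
Your reduction is essentially the paper's: it also pads each element $u$ with extra singleton copies $\{u\}$ (up to frequency $|\DD|$ rather than the maximum frequency $F$, an immaterial difference) while keeping the budget unchanged, and it treats $\DD^\pr$ as a multi-collection (written with $\uplus$), so your dummy-element workaround is not needed. Your forward direction and the swap-based reverse direction correctly supply the equivalence argument that the paper defers to its full version.
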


\begin{proof}
	We exhibit an fpt-reduction from \SC to \USC. Let $(\UU,\DD,b)$ be an instance of \SC. Without loss of generality, we assume that, for every element $u\in\UU$, there exists a set $X\in\DD$ such that $u\in X$. We construct the following instance $(\UU^\pr,\DD^\pr,b^\pr)$ of \USC. For any element $u\in\UU$, let $f_u$ be the number of sets in \DD where $u$ belongs and $|\DD|=\el$.
	\begin{align*}
		\UU^\pr &= \UU\\
		\DD^\pr &= \DD \cup_{u\in\UU} \uplus_{i=f_u}^\el \{\{u\}\}\\
		b^\pr &= b 
	\end{align*}
	
	The equivalence of the two instances are straight forward and we defer its proof to the full version of the paper.
\end{proof}

We now prove that \RMSED is \WTH parameterized by the number $k$ of vertices that we are allowed to delete. For that, we exhibit an fpt-reduction from the \USC problem parameterized by the budget to the \RMSED problem parameterized by $k$.
}

\begin{theorem}\label{thm:rmsed_k_wth}\shortversion{$[\star]$}
	The \RMSED problem parameterized by $k$ is \WTH even for bipartite graphs of radius $2$.
\end{theorem}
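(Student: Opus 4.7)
The plan is to exhibit an fpt-reduction from \USC parameterized by the budget (which is \WTH by the preceding proposition) to \RMSED parameterized by $k$. Given an instance $(\UU,\DD,b)$ of \USC in which every element lies in exactly $f$ sets, I would build a graph $\GG^{\pr}$ as follows. Introduce a central vertex $r$, one vertex $v_A$ for each set $A\in\DD$, and one vertex $y_u$ for each element $u\in\UU$. Add the star edges $\{r,v_A\}$ for every $A\in\DD$ together with the incidence edges $\{v_A,y_u\}$ whenever $u\in A$. Take the target set $\SS=\{\{r,y_u\}:u\in\UU\}$, candidate edges $\CC=\{\{r,v_A\}:A\in\DD\}$, budget $k^{\pr}=b$, and threshold $t=f-1$.

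The key design property is that for every element $u$, the common neighbors of the pair $\{r,y_u\}$ in $\GG^{\pr}$ are exactly $\{v_A:u\in A\}$, so every pair in $\SS$ initially has exactly $f$ common neighbors. Crucially, deleting a single edge $\{r,v_A\}\in\CC$ detaches $v_A$ from $r$ and therefore removes $v_A$ as a common neighbor of every pair $\{r,y_u\}$ with $u\in A$ simultaneously. This batching is exactly what lets one edge deletion play the role of selecting one set in a cover. Accordingly, the plan is to argue that a subset $\FF\subseteq\CC$ of size at most $b$ reduces the maximum common-neighbor count over pairs in $\SS$ to at most $f-1$ if and only if $\WW=\{A:\{r,v_A\}\in\FF\}$ is a set cover of $\UU$ of size at most $b$: one direction uses $\FF=\{\{r,v_A\}:A\in\WW\}$ directly, while the other uses that the surviving common neighbors of $\{r,y_u\}$ are precisely $\{v_A:u\in A,\,A\notin\WW\}$, whose cardinality is at most $f-1$ iff some $A\in\WW$ contains $u$.

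For the structural guarantees, $\GG^{\pr}$ is bipartite with parts $\{r\}\cup\{y_u:u\in\UU\}$ and $\{v_A:A\in\DD\}$, because every edge goes between these two sides. Assuming without loss of generality that every element lies in some set, the vertex $r$ reaches each $v_A$ in one step and each $y_u$ in two steps, so the eccentricity of $r$ is $2$ and hence the radius of $\GG^{\pr}$ is $2$. Since $k^{\pr}=b$ and the construction is polynomial, this is an fpt-reduction and yields the claimed \WTH result.

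The main obstacle is not in the verification itself, which is routine, but in the design: one must arrange the graph so that a single edge deletion influences the common-neighbor counts of several target pairs at once, since otherwise no set-cover-style budget blow-up of only $k^{\pr}=b$ is possible. Routing every target pair through the shared vertex $r$ and putting the candidate edges on the star at $r$ is precisely the device that achieves this while keeping the graph bipartite of radius $2$.
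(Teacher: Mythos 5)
Your proposal is correct and is essentially the paper's own reduction: the same star-plus-incidence graph on $r$, set-vertices, and element-vertices, with target pairs $\{r,y_u\}$, budget $b$, and threshold $f-1$. The only (harmless) difference is that you restrict the candidate set $\CC$ to the star edges at $r$, which slightly simplifies the backward direction, whereas the paper takes $\CC$ to be all edges and handles possible deletions of incidence edges by placing the corresponding set in the cover.
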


\longversion{
\begin{proof}
	We exhibit an fpt-reduction from \USC to \RMSED. Let $(\UU,\SS,b)$ be an arbitrary instance of \USC. Let $f$ be the number of sets that every element in \UU belongs. We consider the following instance $(\GG=(\VV,\EE),\SS,\CC,k,t)$ of \RMSED.
	\begin{align*}
		\VV &= \{r\} \cup \{x_u: u\in\UU\}\cup \{y_D: D\in\DD\}\\
		\EE &= \{\{r,y_D\}:D\in\DD\}\cup\{\{x_u,y_D\}:u\in D\}\\
		\SS &= \{\{r,x_u\}:u\in\UU\}\\
		\CC &= \EE^\pr, k=b, t=f-1
	\end{align*}
	We claim that the \USC instance is a \YES instance if and only if the \RMSED instance is a \YES instance.
	
	In one direction, let us assume that the \USC instance is a \YES instance. Let $\WW\subseteq\DD$ forms a set cover for \UU and $|\WW|\le b$. We claim that $\FF=\{\{r,y_D\}:D\in\WW\}\subseteq\EE^\pr$ is a solution for the \RMSED instance. To see this, we consider any pair $\{r,x_u\}\in\SS$. By the definition of $f$, there are $f$ common neighbors of $r$ and $x_u$ in \GG. Since \WW forms a set cover, the number of common neighbors of $r$ and $x_u$ in $\GG\setminus\FF$ is at most $f-1$. We also have $|\FF|\le b=k$. This proves that the \RMSED instance is a \YES instance.
	
	On the other direction, let us assume that the \RMSED instance is a \YES instance. Let $\FF\subseteq\EE^\pr$ be a set of edges such that (i) $|\FF|\le k$ and (ii) the number of common neighbors in every pair of vertices in \SS is at most $t=f-1$. We consider the sub-collection $\WW=\{ D\in\DD: \{r,y_D\}\in\FF\text{ or }\{x_u,y_D\}\in\FF\text{ for some } u\in\UU \}\subseteq\DD$. Since $|\FF|\le k$, we have $|\WW|\le |\FF|\le k=b$. We claim that \WW forms a set cover for \UU. Suppose not, then there exists an element $u\in\UU$ that is not covered by \WW. Then the number of common neighbors of $r$ and $x_u$ in $\GG\setminus\FF$ is $f$ which is a contradiction. Hence the \USC instance is a \YES instance.
	
	Since the above reduction is an fpt-reduction, the result follows.
\end{proof}
}

Our last result of this section is that the \RMSED problem is \pNPH parameterized by the maximum degree $\Delta$ of the graph. For that, we use the well known result that the \VC problem is \NPC even for $3$ regular graphs~\cite{garey1974some}.

\begin{theorem}\label{thm:rmsed_D}\shortversion{$[\star]$}
	The \RMSED problem is \NPC even if the degree of every vertex in the input graph is at most $7$.
\end{theorem}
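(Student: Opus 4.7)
The plan is to reduce from \VC on triangle-free $3$-regular (cubic) graphs. \VC is \NPC on cubic graphs by~\cite{garey1974some}, and the triangle-free restriction remains \NPH by standard girth-enlarging preprocessing; membership of \RMSED in \NP is immediate because a candidate set $\FF\subseteq\CC$ of edges to delete can be verified in polynomial time by recomputing the common neighborhood of every pair in \SS in $\GG\setminus\FF$.

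Given a \VC instance $(\GG=(\VV,\EE),k)$ with \GG triangle-free and cubic, I would construct the \RMSED instance $(\GG^\pr=(\VV^\pr,\EE^\pr),\SS,\CC,k^\pr,t)$ as follows: introduce two new vertices $u_v$ and $h_v$ for every $v\in\VV$; add the ``self'' edge $\{u_v,h_v\}$ for each $v$, together with the two edges $\{u_v,h_w\}$ and $\{u_w,h_v\}$ for every $\{v,w\}\in\EE$; set $\SS=\{\{u_v,u_w\}:\{v,w\}\in\EE\}$, $\CC=\{\{u_v,h_v\}:v\in\VV\}$, $k^\pr=k$, and $t=1$. Each of $u_v$ and $h_v$ then has degree $1+\deg_\GG(v)=4$, so $\GG^\pr$ has maximum degree $4\le 7$.

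The correctness rests on one structural observation: a vertex $h_y$ is a common neighbor of the pair $\{u_v,u_w\}\in\SS$ in $\GG^\pr$ if and only if $y\in\left(\{v\}\cup\NN_\GG(v)\right)\cap\left(\{w\}\cup\NN_\GG(w)\right)$, and triangle-freeness forces this intersection to be exactly $\{v,w\}$ whenever $\{v,w\}\in\EE$. Hence every pair in \SS has precisely two common neighbors $h_v$ and $h_w$, and reducing it below the threshold $t=1$ requires disconnecting at least one of them. Since \CC contains only edges of the form $\{u_x,h_x\}$, the only way to disconnect $h_v$ from this pair is to delete $\{u_v,h_v\}$ (and symmetrically for $h_w$). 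It follows immediately that $\FF\subseteq\CC$ with $|\FF|\le k$ solves the \RMSED instance if and only if $W:=\{v:\{u_v,h_v\}\in\FF\}$ is a vertex cover of \GG of size at most $k$, giving both directions of the equivalence.

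The main obstacle I anticipate is the uniformity of the common-neighbor count across all pairs. A single triangle $\{v,w,y\}$ in \GG would make $h_y$ a third common neighbor of the pair $\{u_v,u_w\}$, so some pairs would have three rather than two common neighbors, and a single threshold $t$ would no longer encode \VC cleanly; this is precisely why the reduction starts from \emph{triangle-free} cubic \VC rather than cubic \VC directly. Everything else---the degree bound, the two directions of the correspondence, and the polynomial running time of the reduction---is routine bookkeeping.
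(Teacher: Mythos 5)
Your reduction is correct, and at its core it is the same vertex-splitting idea as the paper's proof: the paper also reduces from \VC on $3$-regular graphs, creates two copies $x_u,y_u$ of every vertex $u$, takes one target pair per edge of \GG with threshold $t=1$, and encodes ``put $u$ in the cover'' as deleting the self edge $\{x_u,y_u\}$. The differences are that the paper additionally keeps the edges $\{x_u,x_v\}$ for $\{u,v\}\in\EE$ (hence its degree bound of $7$ rather than your $4$), sets $\CC=\EE^\pr$ instead of restricting \CC to the self edges, and reduces from plain cubic \VC without any triangle-freeness assumption. Your two deviations buy something real. Restricting \CC to the self edges makes the backward direction immediate, whereas with $\CC=\EE^\pr$ one must first note that edges joining two $x$-vertices never affect any common neighbourhood (so they can be discarded) before the extracted vertex set can be bounded by $|\FF|$. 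More importantly, your triangle-freeness concern is well founded: in the paper's construction a triangle $\{u,v,z\}$ of \GG makes $x_z$ a third common neighbour of the pair $\{y_u,y_v\}$, and deleting only the self edges of a vertex cover never removes it, so the forward direction is only clean for triangle-free inputs; you identified exactly this and paid for it by strengthening the source problem, which makes both directions of your equivalence airtight. The one small inaccuracy is in the source of hardness: the standard girth-enlarging preprocessing (double-subdividing each edge, which raises the minimum vertex cover by exactly one per subdivided edge) destroys $3$-regularity, so what it directly yields is NP-hardness of \VC on triangle-free graphs of maximum degree $3$, not on triangle-free cubic graphs. Since your argument only uses the degree bound (giving maximum degree at most $4\le 7$ in $\GG^\pr$) and never regularity, this costs nothing, but you should either drop the word ``cubic'' after the preprocessing or cite a hardness result for triangle-free cubic \VC explicitly.
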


\longversion{
\begin{proof}
	The \RMSED problem is clearly in \NP. To prove \NP-hardness, we reduce from an arbitrary instance $(\GG=(\VV,\EE),k)$ of \VC on $3$-regular graph. We construct the following instance $(\GG^\pr=(\VV^\pr,\EE^\pr), \SS, \CC, k^\pr,t)$ of \RMSED.
	\begin{align*}
		\VV^\pr &= \{x_u, y_u: u\in\UU\}\\
		\EE^\pr &= \{\{x_u,x_v\}, \{x_u, y_v\}, \{x_v, y_u\}: \{u,v\}\in\EE\}\\
		&\cup \{\{x_u,y_u\}: u\in\VV\}\\
		\SS &= \{\{y_u, y_v\}: \{u,v\}\in\EE\}\\
		\CC &= \EE^\pr\\
		k^\pr &= k, t=1
	\end{align*}
	We now claim that the two instances are equivalent. Since the degree of every vertex in \GG is $3$, it follows that the degree of any vertex in $\GG^\pr$ is at most $7$. In one direction let us assume that the \VC instance is a \YES instance. Let $\WW\subseteq\VV$ be a vertex cover of cardinality $k$. We claim that, after removing every edge in the set $\FF=\{\{x_w,y_w\}:w\in\WW\}$, the number of common neighbors between every pair of vertices in \SS is at most $1$. Suppose not, then there exists a pair $\{y_u, y_v\}$ which has at least $2$ neighbors in $\GG^\pr\setminus\FF$. Then, it follows that both $\{x_u,y_u\}\notin\FF$ and $\{x_v,y_v\}\notin\FF$. However this implies that \WW does not cover the edge $\{u,v\}$ in \GG which contradicts our assumption that \WW is a vertex cover for \GG. Hence the \RMSED instance is a \YES instance.
	
	For the other direction, let us assume that there exists a subset $\FF\subseteq\EE^\pr$ of edges in $\GG^\pr$ such that, in $\GG^\pr\setminus\FF$, the number of common neighbors between every pair of vertices in \SS is at most $1$. Let us consider a subset $\WW=\{w\in\VV: \text{ an edge incident on } x_w \text{ belongs to } \FF\}$. Since $|\FF|\le k^\pr=k$, we have $|\WW|\le k$. We claim that \WW forms a vertex cover for \GG. Suppose not, then there exists an edge $\{u,v\}\in\EE$ which the set \WW does not cover. Then it follows that the both $x_u$ and $x_v$ are common neighbor of the pair $\{y_u, y_v\}$ of vertices. However, this contradicts our assumption about \FF. Hence \WW forms a vertex cover for \GG and the \VC instance is a \YES instance. 
\end{proof}
}

We finally consider the average degree $\delta$ of the graph as our parameter. We show that \ESED is \pNPH parameterized by $\delta$.

\begin{theorem}\label{thm:ph_avg_deg_esed}
	The \ESED problem is \pNPH parameterized by average degree $\delta$ of the input graph.
\end{theorem}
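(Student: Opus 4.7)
The strategy is a simple padding argument: take an NP-hard instance of \ESED and pad it with isolated vertices, which does not disturb any neighborhood (and hence preserves the answer) but drives the average degree below any chosen constant.

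First I would establish NP-hardness of \ESED as a baseline. This follows from the reduction used in \Cref{thm:rsed_k_woh}: instantiating that reduction with $s = m$ yields a reduction from \VC to \RSED in which the target becomes $t = m - s = 0$, and an \RSED instance with $t = 0$ is literally an \ESED instance (zero total common neighbors is the same as no common neighbors for every pair in \SS). Since \VC is NP-hard, so is \ESED.

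Next I would describe the padding. Given an arbitrary \ESED instance $(\GG = (\VV, \EE), \SS, \CC, k)$ with $n$ vertices and $m$ edges, construct $\GG^\pr = (\VV^\pr, \EE)$ by adding $N$ fresh isolated vertices to \GG, keeping $\SS, \CC, k$ unchanged. Because the added vertices have empty neighborhood, they cannot be common neighbors of any pair, and the neighborhood of every vertex of \GG is unaffected. Thus $(\GG^\pr, \SS, \CC, k)$ is a \YES instance of \ESED if and only if $(\GG, \SS, \CC, k)$ is. The average degree of $\GG^\pr$ is
\[
\delta(\GG^\pr) \;=\; \frac{2m}{n + N},
\]
so choosing $N = 2m$ (polynomially bounded in the input size) guarantees $\delta(\GG^\pr) \le 1$. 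More generally, for any constant $c > 0$, picking $N = \lceil 2m/c \rceil$ drives the average degree below $c$ while keeping the reduction polynomial.

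Composing the two steps shows that \ESED restricted to graphs of average degree at most $1$ (say) remains NP-hard, which is precisely what it means to be \pNPH parameterized by $\delta$. I do not anticipate any real obstacle here: the only thing to be careful about is the baseline NP-hardness of \ESED itself (rather than of \RSED or \RMSED), and that is cleanly supplied by the $s = m$, $t = 0$ specialization of the reduction appearing in \Cref{thm:rsed_k_woh}.
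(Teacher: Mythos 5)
Your proof is correct and takes essentially the same approach as the paper: the paper likewise reduces \ESED to itself by padding the instance with $n^2$ fresh vertices (arranged as a path attached to one vertex, rather than left isolated) while keeping $\SS$, $\CC$, $k$ unchanged, thereby forcing the average degree below a constant. Your explicit handling of the baseline NP-hardness via the $s=m$, $t=0$ specialization of the reduction in \Cref{thm:rsed_k_woh} is exactly the fact the paper relies on (implicitly, via its remark preceding \Cref{cor:etsed_2aprox_lb}), so there is no gap.
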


\longversion{
\begin{proof}
	We reduce any instance of \ESED to another instance of \ESED where, in the reduced instance, the average degree is constant. Let $\Pi=(\GG=(\VV,\EE),\SS,\CC,k)$ be an arbitrary instance of \ESED. We consider the following instance $\Pi^\pr=(\GG^\pr=(\VV^\pr,\EE^\pr),\SS^\pr,\CC^\pr,k^\pr)$ of \ESED. Let $|\VV|=n$ and $v$ be an arbitrary vertex of \GG.
	\begin{align*}
		\VV^\pr &= \VV \cup \{w_i: i\in[n^2]\}\\
		\EE^\pr &= \EE \cup \{\{w_i,w_{i+1}\}: i\in[n^2-1]\} \cup \{\{v,w_1\}\}\\
		\SS^\pr &= \SS, \CC^\pr = \CC, k^\pr = k
	\end{align*}
	
	We observe that the average degree of $\GG^\pr$ is at most $\frac{2n^2}{n^2+n}\le2$. We now claim that the two instances are equivalent. In one direction, if $\FF\subseteq\EE$ is a solution for $\Pi$ then \FF is a solution for $\Pi^\pr$ too. On the other hand, if $\FF^\pr\subseteq\EE^\pr$ is a solution for $\Pi^\pr$, then $\FF^\pr\cap\EE$ is a solution for $\Pi$ too.
\end{proof}
}

Due to \Cref{prop:connection}, we immediately have the following from \Cref{thm:ph_avg_deg_esed}.

\begin{corollary}\label{col:wh_avg_deg_rsed_rmsed}
	Both the \RSED and \RMSED problems are \pNPH parameterized by average degree $\delta$ of the input graph.
\end{corollary}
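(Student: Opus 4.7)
The plan is simply to deduce the corollary from Theorem~\ref{thm:ph_avg_deg_esed} via the reductions guaranteed by Proposition~\ref{prop:connection}. Since \pNPH parameterized by $\delta$ just means \NP-hardness at some fixed value of $\delta$, it suffices to verify that the reductions witnessing Proposition~\ref{prop:connection} do not inflate the average degree of the underlying graph.

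Concretely, I would take the natural many-to-one reductions: given an \ESED instance $(\GG,\SS,\CC,k)$, produce an \RSED instance $(\GG,\SS,\CC,k,0)$ by setting $t=0$, and similarly an \RMSED instance $(\GG,\SS,\CC,k,0)$ by setting $t=0$. In either case, the threshold condition (the sum, respectively the maximum, of common neighbor counts over \SS being at most $0$) holds exactly when every pair in \SS has empty common neighborhood, which is precisely the \ESED requirement. So the yes/no answer is preserved, and the reduction is clearly polynomial time.

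The crucial observation is that these translations leave the underlying graph \GG, and thus its average degree $\delta$, completely unchanged. Hence the padded construction of Theorem~\ref{thm:ph_avg_deg_esed}, which exhibits \NP-hard \ESED instances with $\delta\le 2$, feeds through verbatim: we obtain \NP-hard \RSED and \RMSED instances with $\delta\le 2$, establishing \pNPH for both problems parameterized by $\delta$.

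I do not expect any real obstacle here; the entire argument rests on the trivial correctness of the $t=0$ embedding, which is immediate from the definitions of the three problems, together with the fact that the graph itself is copied without modification so no averaging effect can occur.
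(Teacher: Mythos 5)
Your proposal is correct and follows exactly the paper's route: the corollary is derived from \Cref{thm:ph_avg_deg_esed} via the $t=0$ reductions of \Cref{prop:connection}, which leave the graph (and hence $\delta$) untouched. Your explicit verification that the embedding preserves average degree is just a spelled-out version of what the paper leaves implicit.
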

\begin{figure*}[t]
    \centering
    \subfloat[Web ]{\includegraphics[width=0.4\textwidth]{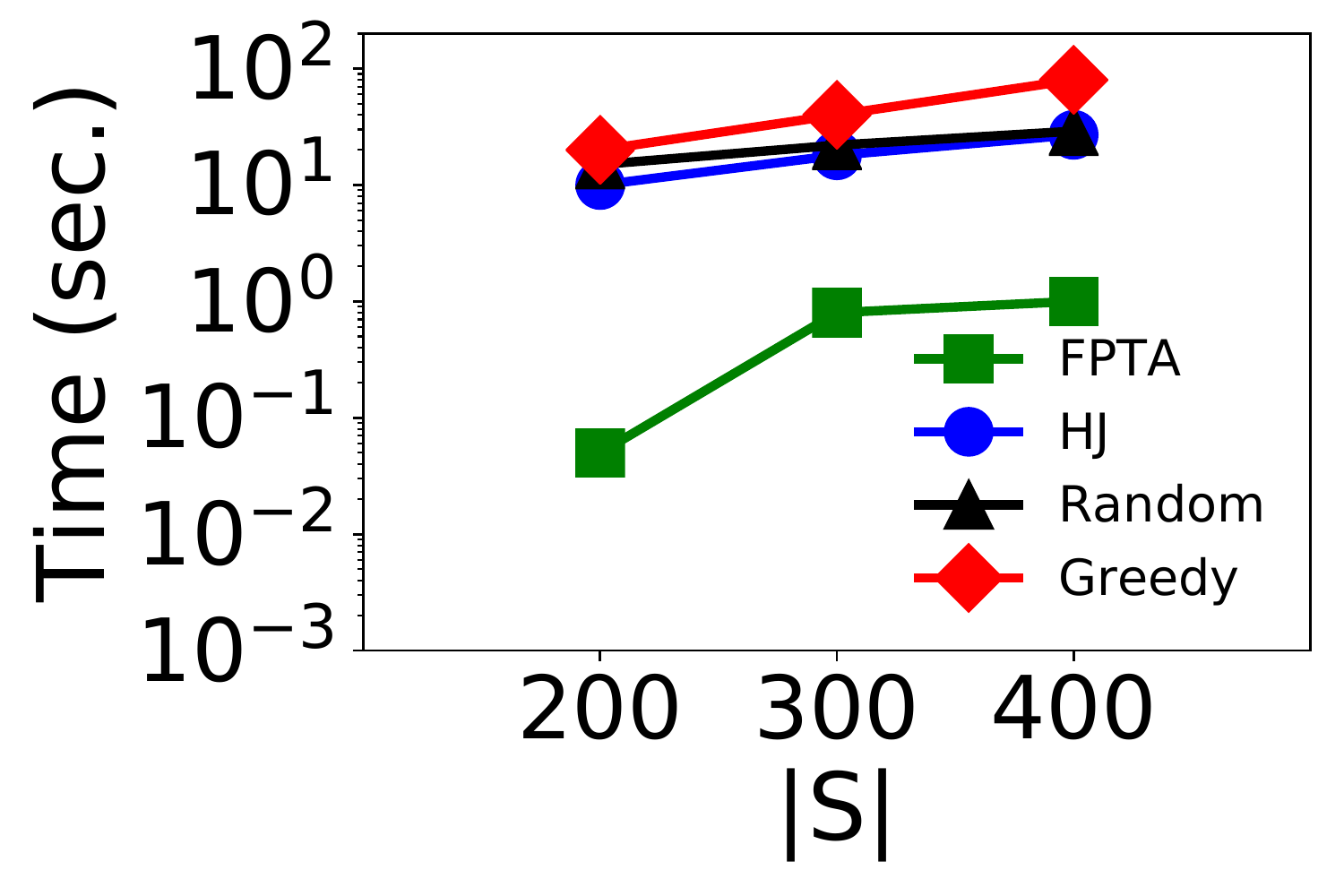}\label{fig:web_time}}
     \subfloat[Power]{\includegraphics[width=0.4\textwidth]{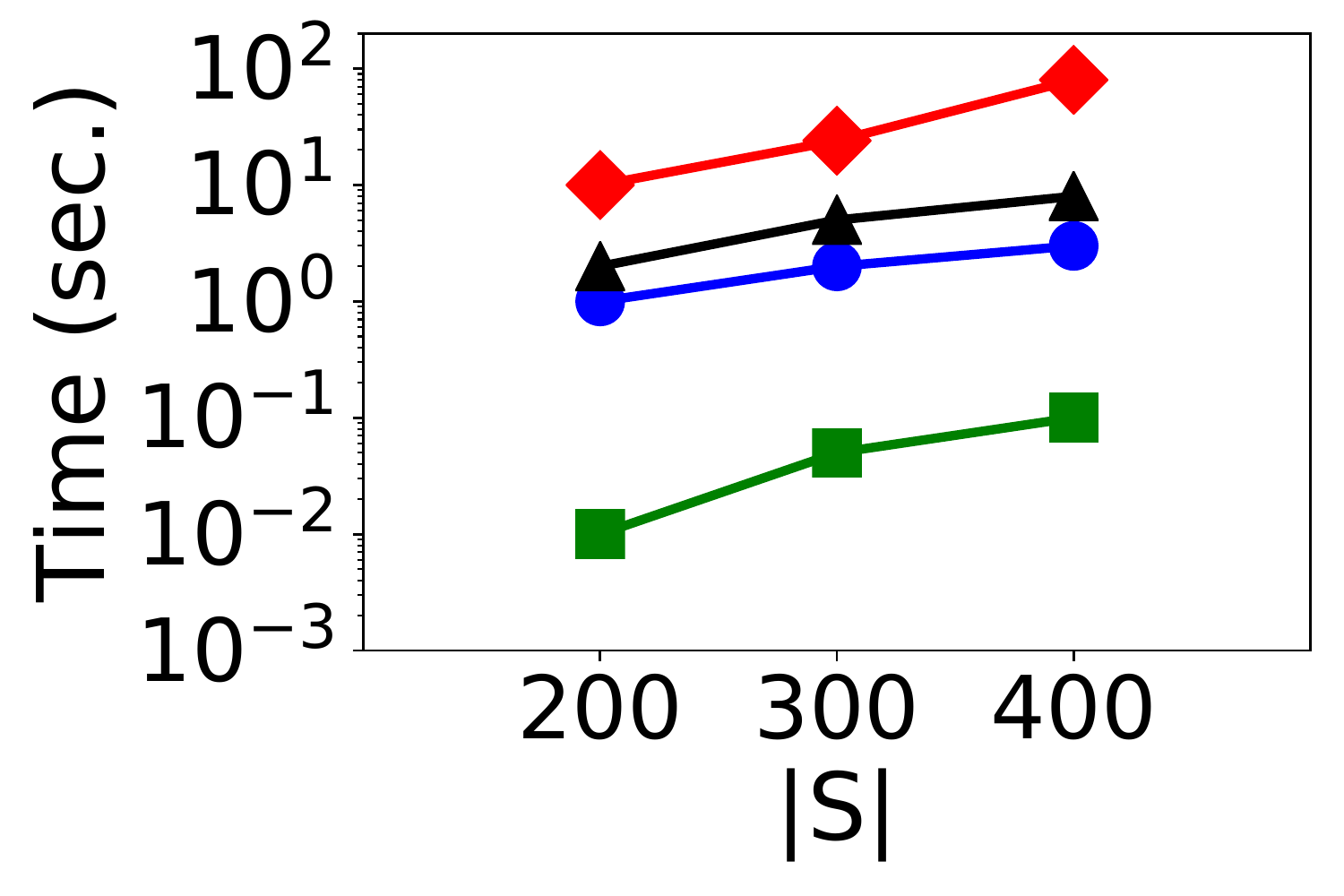}\label{fig:power_time}} \\
    \subfloat[Social]{\includegraphics[width=0.4\textwidth]{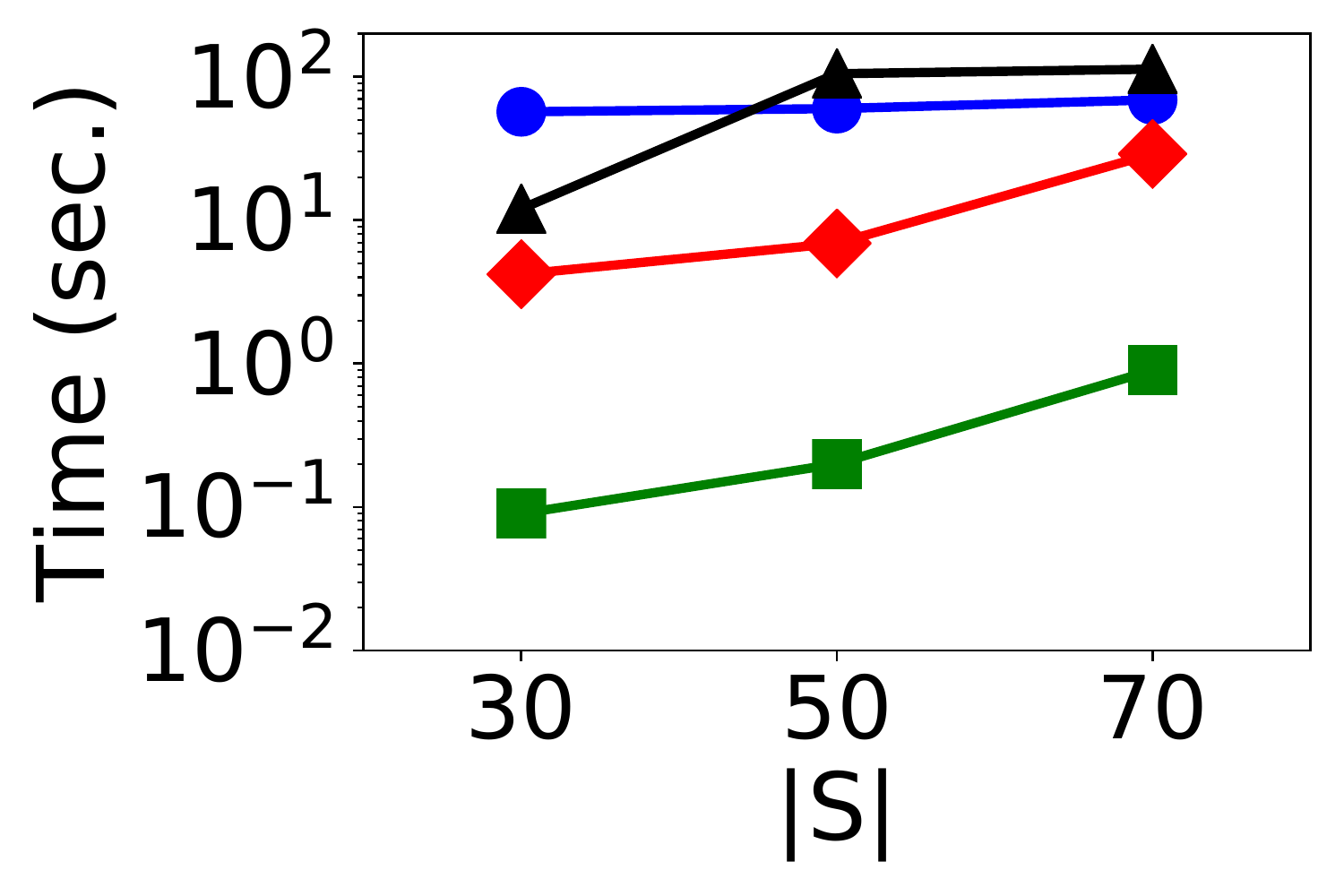}\label{fig:social_time}} 
    \subfloat[Road]{\includegraphics[width=0.4\textwidth]{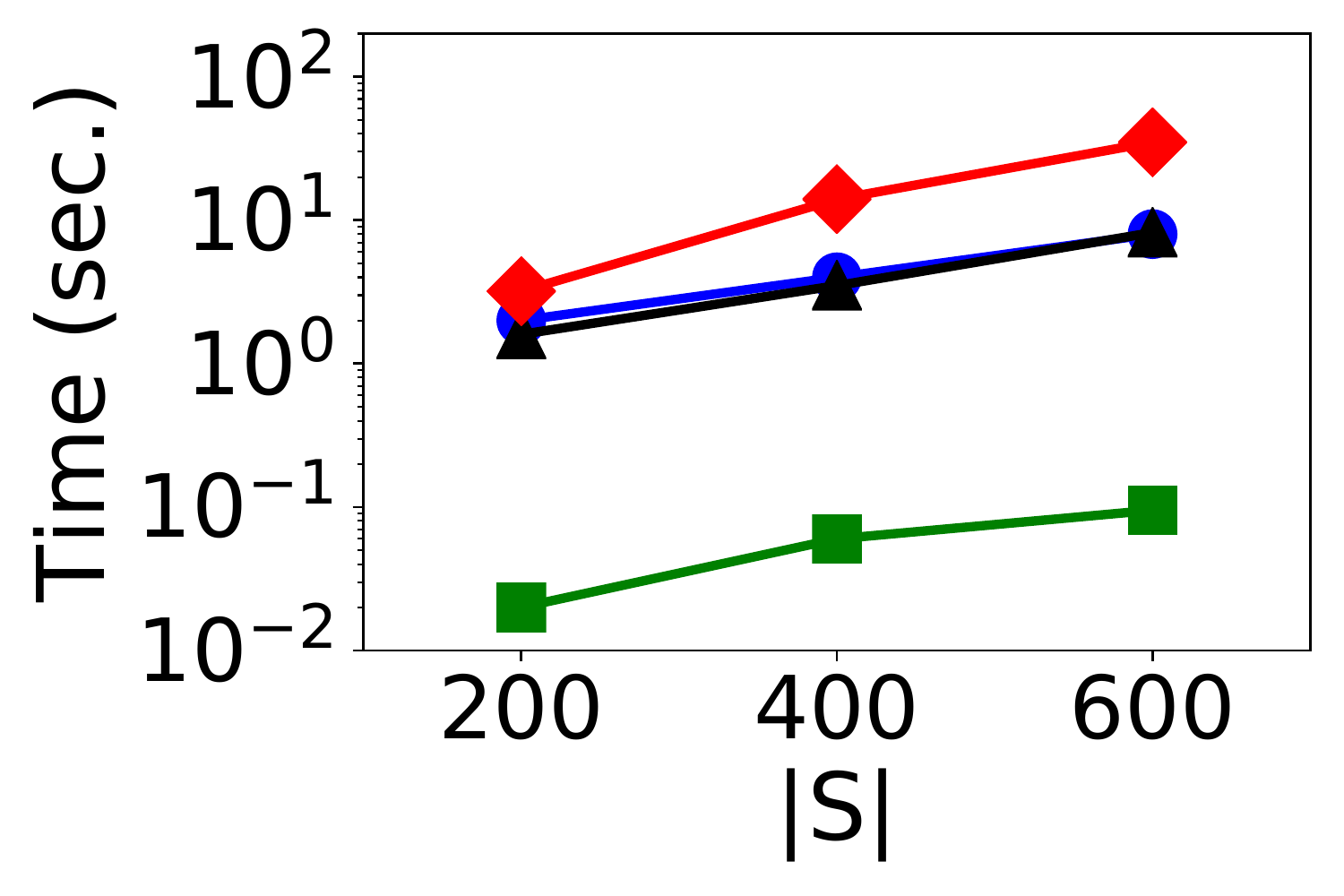}\label{fig:road_time}} 
    \caption{ Time taken by different algorithms: Our algorithm FPTA outperforms other baselines in terms of efficiency. \label{fig:baselines_time}}
\end{figure*}

\section{Experimental Results}\label{sec:exp}

In this section, we evaluate the performance of our FPT algorithm in \Cref{thm:etsed} using real and synthetic networks for the optimization version of the \ESED problem. We call our solution as FPTA. We implement our solutions in python and executed on a $3.30$GHz Intel core with $30$ GB RAM.

\begin{table}[t]
	\centering
	
	\begin{tabular}{|c|c|c|c|}
	\hline
		\textbf{Dataset}& Type &$|V|$ & $|E|$\\\hline
		power & Power & 662 & 1.5k \\\hline
		hamsterster & Social & 2.4k & 16.6k \\\hline
		euroroad & Road & 1.1k & 1.4k \\\hline
		web-edu & Web & 3k & 6.4k \\\hline
	\end{tabular}
	\caption{Description of Datasets}\label{tbl:dataset}
\end{table}

\textbf{Datasets: } We use synthetic graphs from two well-studied models: (a) Barabasi-Albert (BA)~\cite{barabasi1999emergence} and (b) Erdos-Renyi (ER)~\cite{erdHos1960evolution}. While the BA model has ``small-world'' property and scale free degree distribution, ER does not have any of these properties. We generate both the datasets of one thousand vertices (dataset of similar size as in \cite{Zhou:2019:ASL:3306127.3331707}) and approximately two thousands edges. The real datasets are from different genres: Web, social, road and power networks. Table \ref{tbl:dataset} shows the statistics. The datasets are available online\footnote{http://networkrepository.com}.

\textbf{Baselines: }We compare our algorithm (FPTA) with two baselines. Note that our algorithm produces optimal results. \textbf{(1) Greedy: }Our first baseline is the greedy baseline that is used in many graph combinatorial problems \cite{medya2018noticeable,silva2015hierarchical}. It selects an edge in each iteration which decreases a maximum number of common neighbors between the given pairs of nodes and removes it from the graph. \textbf{(2) High Jaccard Similarity (HJ):} It selects top edges based on the similarity of endpoints of the edge to delete until every given pair vertices have disjoint neighborhood. \textbf{(3) Random: } It iteratively selects random edges to delete until the total similarity of the target pairs becomes zero. The performance metric of these algorithms is the number of edges being deleted to remove similarity for all the given pairs. Hence, the quality is better when the number of edges is lower.

 In the experiments, we choose the target pairs ($S$) randomly from all the pairs of vertices. The size of $S$ is varied depending of the size and nature of the datasets.

\Cref{thm:etsed} shows that an FPT algorithm for \ESED parameterized by the number of edges that we are allowed to delete. Thus our algorithm always outputs a minimum set of edges. However, we evaluate the efficiency of our method in terms of both quality and running time. Table \ref{table:quality_random} shows the results varying $|S|$ (the number of target pairs) on four different real datasets. The results in synthetic graphs also have similar trend. The optimal set of edges are quite low compared to the Random and HJ baselines. We also find that the greedy algorithm also produces nearly optimal results. However it is quite time consuming.
 We show the results regarding the time taken by different algorithms in Figure \ref{fig:baselines_time} (real graphs) and Figure \ref{fig:baselines_time_synthetic} (synthetic graphs). The Y-axis is in logarithmic scale. In all six datasets, the time taken by our algorithm FPTA is at least two order faster. The most time consuming algorithm is the greedy algorithm. While Random and HJ are faster than Greedy in most of the cases, it produces much worse results compared to FPTA (Table \ref{table:quality_random}). 
 

\begin{figure}[t]
    \centering
    \subfloat[ER ]{\includegraphics[width=0.4\textwidth]{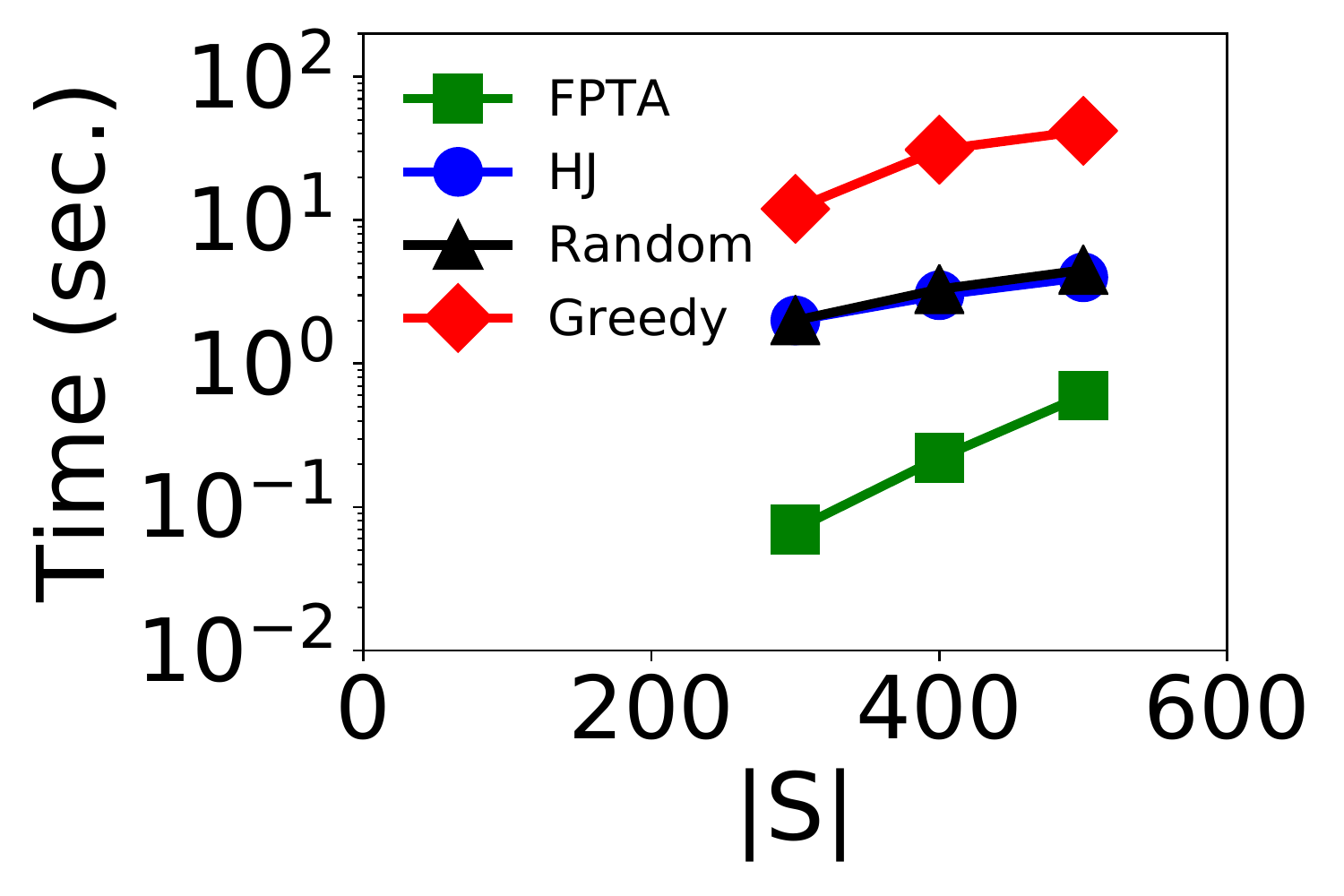}\label{fig:ER_time}}
     \subfloat[BA]{\includegraphics[width=0.4\textwidth]{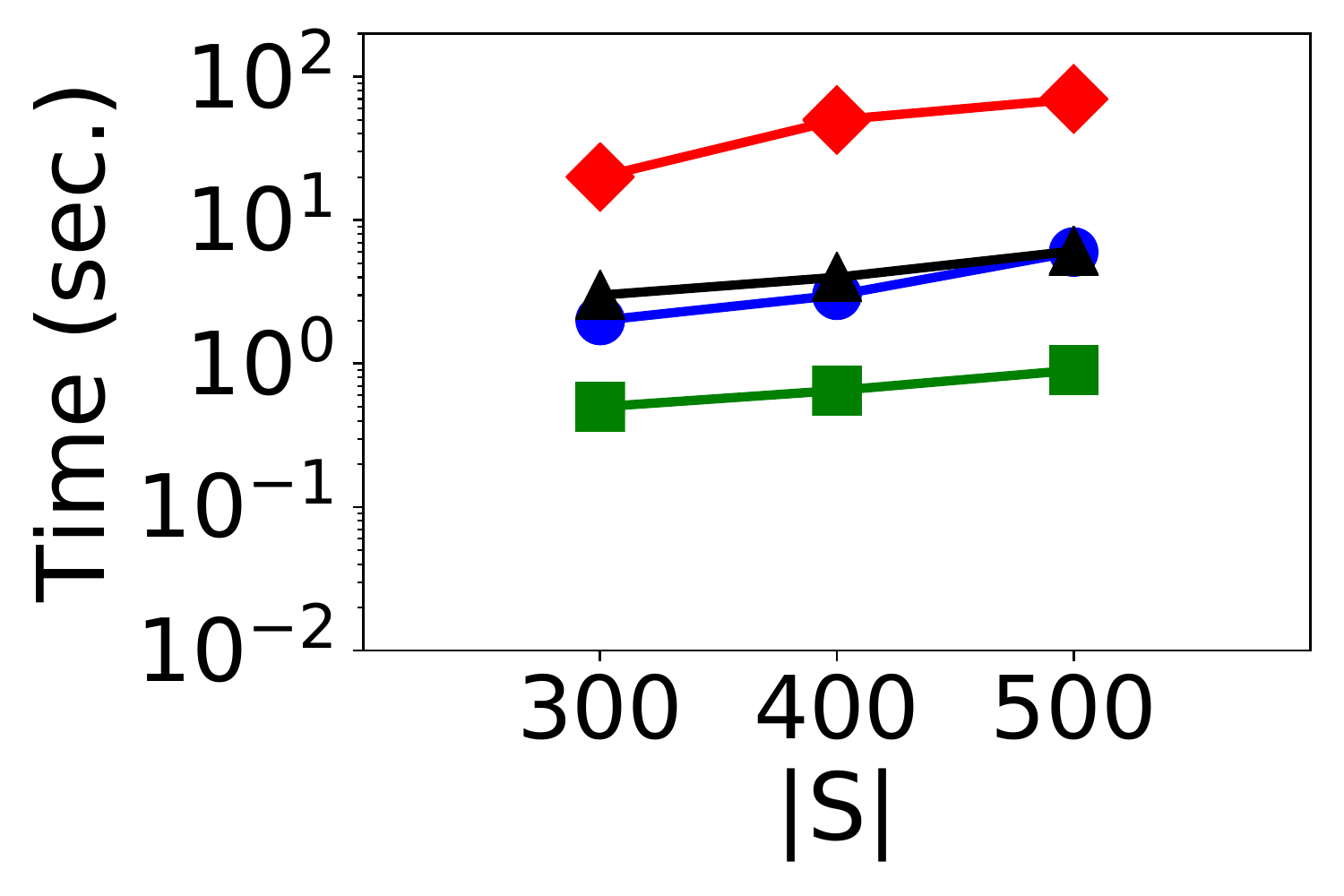}\label{fig:BA_time}} 
    \caption{ \textbf{Synthetic Graphs: }Time taken by different algorithms: Our algorithm FPTA outperforms other baselines in terms of efficiency. \label{fig:baselines_time_synthetic}}
\end{figure}

\begin{table}[t]
\centering
{\small
\begin{tabular}{| c|c |c | c|c |}
\hline
\textbf{Graph}& $|S|$ &  \textbf{FPTA} & \textbf{HJ} & \textbf{Random}\\
\hline
 \multirow{3}{*}{Social}&30 & 9 & 3k &4k \\\cline{2-5}
 & 50 & 14 & 9k &10k\\ \cline{2-5}
 & 70 & 20 & 13k &15k\\ \cline{2-5}
 \hline
 \multirow{3}{*}{Power}&200 &3 & 473 & 803 \\\cline{2-5}
 & 300 & 7 & 682& 1.3k\\ \cline{2-5}
 & 400 & 11 & 721&  1.5k\\ \cline{2-5}
 \hline
 \multirow{3}{*}{Web}& 200 & 3 & 1.9k& 2.3k \\\cline{2-5}
 & 300 & 7 & 4k &4.3k\\ \cline{2-5}
 & 400 & 12 & 5k & 6.3k\\ \cline{2-5}
 \hline
 \multirow{3}{*}{Road}&200 & 3 & 421 & 400 \\\cline{2-5}
 & 400 & 4 & 800 & 623\\ \cline{2-5}
 & 600 & 7 & 1.1k & 1.2k \\ \cline{2-5}
 
\hline
\end{tabular}
}
\caption{Comparison for the number of edges produced by FPTA and Random. \label{table:quality_random}}
 \end{table}
\section{Related Work} \label{sec:prev_work}
Zhou et al. initiated the study of the problem of attacking node based similarity measures via edge removal~\cite{Zhou:2019:ASL:3306127.3331707}. The authors proved that the problems based on local and global structural similarities are NP-hard and provided some heuristics. However, we focus on the parameterized complexity of this problem and its variations. Our work is also related to studying vulnerability of
social network analysis (SNA). Attacking against centrality measures via edge manipulation had been studied in the past \cite{waniek2017construction,dey2019covert}. These papers show that it is computationally hard to hide for a leader inside a covert network. Waniek et al. \cite{waniek2018attack} and Yu et al. \cite{yu2018target} recently discussed the problem of hiding or anonymizing links on networks. Similarly, measuring network robustness via network modification is also a well studied problem. Laishram et al. \cite{Laishram2018} formulated network resilience in terms of the stability of $k$-cores against deletion of random nodes/edges. Other work \cite{zhang2017finding,medya2019k} discussed a related problem which was to find $b$ vertices such that their deletion reduces the $k$-core maximally. Zhu et al. studied this problem via edge deletion~\cite{zhu2018k}. 

All the above problems are in the category of manipulating network structure to optimize for certain objective. There is another line of related work concerning network design. Paik et al. first initiated this line of work by studying several network design problems based on vertex upgradation to decrease the delays on adjacent edges~\cite{paik1995}. Since then these  problems have received a significant amount of research attention. Meyerson et al. designed algorithms for the minimization of shortest path distances~\cite{meyerson2009}. A series of recent work~\cite{lin2015,dilkina2011,medya2018making,medya2018noticeable} studied the problem of minimizing the shortest path distance by improving edge or node weights. Another related line of research work concern the problem of increasing the centrality of a node or a set of nodes by adding edges~\cite{crescenzi2015,ishakian2012framework,medya2018group,amelkin2019fighting}. Boosting or containing diffusion processes in networks were investigated under different diffusion models such as Linear Threshold model \cite{Khalil2014}, credit distribution model \cite{medya2019influence}, and Independent Cascade model \cite{kimura2008minimizing,chaoji2012recommendations}.
 
\section{Conclusion}

We have proposed three graph theoretic problems and argued that they capture the main computational challenge of attacking local similarity measures. We have studied these problems in parameterized complexity framework. We have considered the number of edges that we are allowed to delete and the number of targeted pairs of nodes as our parameters and shown either existence of \FPT algorithm or parameterized intractability for them. We have also exhibited polynomial time algorithm for the \ESED problem in an important special case. We finally establish effectiveness of our FPT algorithm for the \ESED problem in real and synthetic data sets. An important future direction of research is to study kernelization techniques for these problems. The authors are not aware of any kernelization based work in social network analysis and believe that kernels would be useful in practice.

\bibliographystyle{unsrt}
\bibliography{main}

\end{document}